\newcommand{\nc}{\newcommand}
\nc{\rnc}{\renewcommand}
\nc{\twocol}[1]{\iftoggle{twocol}{#1}{}}
\nc{\onecol}[1]{\iftoggle{twocol}{}{#1}}
\nc{\twoone}[2]{\iftoggle{twocol}{#1}{#2}}
\newcounter{MYtempeqncnt}
\newcommand{\ket}[1]{\left|#1\right\rangle}
\newcommand{\proj}[1]{\left|#1\right\rangle\left\langle #1\right|}
\newcommand{\seteq}{\mathrel{\mathop:}=}
\DeclareMathOperator{\conv}{conv}
\DeclareMathOperator{\id}{id}
\DeclareMathOperator{\tr}{tr}
\DeclareMathOperator{\Sep}{Sep}
\DeclareMathOperator{\famSep}{\mathbf{Sep}}
\DeclareMathOperator{\famSEP}{\mathbf{SEP}}
\DeclareMathOperator{\supp}{supp}
\DeclareMathOperator{\swap}{SWAP}
\DeclareMathOperator{\SEP}{SEP}
\DeclareMathOperator{\PPT}{PPT}
\def\be#1\ee{\begin{equation}#1\end{equation}}
\def\bea#1\eea{\begin{eqnarray}#1\end{eqnarray}}
\def\beas#1\eeas{\begin{eqnarray*}#1\end{eqnarray*}}
\def\ba#1\ea{\begin{align}#1\end{align}}
\def\bas#1\eas{\begin{align*}#1\end{align*}}
\def\non{\nonumber}
\def\nn{\nonumber}
\def\eq#1{(\ref{eq:#1})}
\def\ra{\rightarrow}
\def\ot{\otimes}
\newtheorem{thm}{Theorem}
\newtheorem*{thm*}{Theorem}
\newtheorem{lemma}[thm]{Lemma}
\newtheorem{lem}[thm]{Lemma}
\newtheorem{dfn}{Definition}
\newtheorem{proto}{Protocol}
\newtheorem{con}[thm]{Conjecture}
\newtheorem{rem}[thm]{Remark}
\newtheorem*{rep@theorem}{\rep@title}
\newcommand{\newreptheorem}[2]{%
\newenvironment{rep#1}[1]{%
 \def\rep@title{#2 \ref{##1} (restatement)}%
 \begin{rep@theorem}}%
 {\end{rep@theorem}}}
\newcommand{\1}{\mathbf{1}}
\def\cA{\mathcal{A}}
\def\cD{\mathcal{D}}
\def\cL{{\cal L}}
\def\cM{{\cal M}}
\def\cP{\mathcal{P}}
\def\bbC{\mathbb{C}}
\DeclareMathOperator*{\E}{\mathbb{E}}
\def\bbN{\mathbb{N}}
\def\bbR{\mathbb{R}}
\def\benum{\begin{enumerate}}
\def\eenum{\end{enumerate}}
\def\bit{\begin{itemize}}
\def\eit{\end{itemize}}
\newcommand{\secref}[1]{Section~\ref{sec:#1}}
\newcommand{\appref}[1]{Appendix~\ref{sec:#1}}
\newcommand{\lemref}[1]{Lemma~\ref{lem:#1}}
\newcommand{\thmref}[1]{Theorem~\ref{thm:#1}}
\newcommand{\defref}[1]{Definition~\ref{def:#1}}
\newcommand*{\defeq}{\seteq}
\newcommand{\kl}[2]{D(#1\,\|\, #2)}
\newcommand{\dmid}{\,\|\,}
\nc{\hyp}{{\cal E}}
\newcommand{\adv}{\hyp_{\mathrm{adv}}}
\newcommand{\advg}{\gamma_{\mathrm{adv}}}
\newcommand{\remove}[1]{}
\DeclareMathOperator*{\pr}{\mathbb{P}}
\newcommand{\e}{\varepsilon}
\newcommand{\eps}{\varepsilon}
\def\famM{\mathbf{M}}
\def\famR{\mathbf{R}}
\def\famS{\mathbf{S}}
\def\famrho{\boldsymbol{\rho}}
\def\famsigma{\boldsymbol{\sigma}}
\nc{\LO}{\mathsf{LO}}
\nc{\ONELOCC}{\mathsf{1\text{-}LOCC}}
\nc{\LOCC}{{\mathsf{LOCC}}}
\nc{\ALL}{{\mathsf{ALL}}}
\nc{\famALL}{\mathbf{\mathsf{ALL}}}
\nc{\ERM}{D_{\famM}}
\nc{\ERMS}{{D_{\famM,\famS}}}
\nc{\ERMSinf}{{D^\infty_{\famM,\famS}}}
\nc{\ERLO}{{D_{\mathbf{LO}}}}
\nc{\ERONELOCC}{{D_{\mathbf{1\text{-}LOCC}}}}
\nc{\ERLOCC}{{D_{\mathbf{LOCC}}}}
\nc{\ERSEP}{{D_{\mathbf{SEP}}}}
\nc{\ERPPT}{{D_{\mathbf{PPT}}}}
\nc{\ERLOCCinfty}{{D^{\infty}_{\mathbf{LOCC}}}}
\begin{document}
\title{Adversarial hypothesis testing and a quantum Stein's Lemma for restricted measurements}
\author{Fernando G. S. L. Brand\~ao\thanks{Caltech},
 Aram W. Harrow\thanks{MIT},
 James R. Lee\thanks{University of Washington}, and
 Yuval Peres
\footnote{Part of this work appeared in the  Proceedings of the 5th
  conference on Innovations in theoretical computer science (ITCS '14).}}
%\date{}
\maketitle

\begin{abstract}
  Recall the classical hypothesis testing setting with two sets of probability
  distributions $P$ and $Q$.  One receives either $n$ i.i.d.  samples from a distribution
  $p \in P$ or from a distribution $q \in Q$ and wants to decide from which set the points
  were sampled.  It is known that the optimal exponential rate at which errors decrease
  can be achieved by a simple maximum-likelihood ratio test which does not depend on $p$
  or $q$, but only on the sets $P$ and $Q$.

  We consider an adaptive generalization of this model where the choice of $p \in P$ and
  $q \in Q$ can change in each sample in some way that depends arbitrarily on the previous
  samples.  In other words, in the $k^{\text{th}}$ round, an adversary, having observed
  all the previous samples in rounds $1,\ldots,k-1$, chooses $p_k \in P$ and $q_k \in Q$,
  with the goal of confusing the hypothesis test.  We prove that even in this case, the
  optimal exponential error rate can be achieved by a simple maximum-likelihood test that
  depends only on $P$ and $Q$.

  We then show that the adversarial model has applications in hypothesis testing for {\em
    quantum states} using restricted measurements.  For example, it can be used to study
  the problem of distinguishing entangled states from the set of all separable states
  using only measurements that can be implemented with local operations and classical
  communication (LOCC).  The basic idea is that in our setup, the deleterious effects of
  entanglement can be simulated by an adaptive classical adversary.

  We prove a quantum Stein's Lemma in this setting: In many circumstances, the optimal
  hypothesis testing rate is equal to an appropriate notion of quantum relative entropy
  between two states.  In particular, our arguments yield an alternate proof of Li and
  Winter's recent strengthening of strong subadditivity for von Neumann entropy.
\end{abstract}

\section{Introduction}\label{sec:intro}

\twoone{\IEEEPARstart{A}{symmetric}}{A symmetric} hypothesis testing is the problem of distinguishing between two sources where one wants to minimize the rate of false positives (type-1 error) subject to a constraint
on the rate of false negatives (type-2 error).
 In the case of $n$ i.i.d. samples from a classical or quantum source,
 a central result is the Chernoff-Stein Lemma~\cite{Chernoff52, HP91,
   ON00} which states that for any constant bound on the type-2 error,
 the optimal type-1 error decreases at an exponential rate whose
 exponent is given by the classical (respectively, quantum) relative
 entropy.  Similar results hold even when we generalize the problem so
 that the sources are described by an unknown parameter and one needs
 to design a test that works for any choice of the parameter~\cite{Hoeffding65}.

\medskip
\noindent
{\bf First main result: Adversarial hypothesis testing.}
In the first part of this paper (\secref{adaptive}), we generalize
this problem further to allow the parameter to vary adaptively from
sample to sample.  Since we will allow the parameter to depend
arbitrarily on previous samples, this can be thought of as {\em
  adversarial} hypothesis testing.  That is, we wish to devise a test
that can distinguish between samples from two different sets even
against an adversary who can choose the distribution in each round
based on which samples have previously been observed. 

There are some simple cases where it is not hard to see that this additional power cannot help the adversary.  For example, suppose  we are given a coin with heads probability $p$ and wish to distinguish between the cases where $p\in [0,1/3]$ and where $p\in [2/3,1]$.  It is straightforward to show that this general problem is no harder than simply distinguishing a $1/3$-biased coin from a $2/3$-biased coin; equivalently, the adversary gains no advantage from the ability to be adaptive.  On the other hand, distinguishing
between the two settings $p \in \{1/3, 2/3\}$ and $p=1/2$ is clearly impossible, as the adversary can simply choose with probability $1/2$ to flip the $1/3$-biased coin,
and with probability $1/2$ to flip the $2/3$-biased coin.  The resulting distribution of samples is indistinguishable from the one arising from $p=1/2$.
This stresses the role of {\em convexity} since even a non-adaptive adversary can simulate a convex combination of distributions by choosing randomly among them.

We will prove in \thmref{adaptive} that this property is sufficient to
characterize the optimal error rate for asymmetric hypothesis testing
against an adaptive adversary.  Specifically, if the two sources vary
over convex sets of probability distributions, then the problem is no
harder than in the i.i.d. case.  Our \thmref{chernoff} also
establishes a version of this claim for symmetric hypothesis testing.
These two results can be thought of as adversarial versions of the
classic Chernoff-Stein Lemma and Chernoff's Theorem, respectively.
Results in this direction were previously established for arbitrarily
varying sources~\cite{AVS-hypo} which can be viewed as a special case
of a non-adaptive adversary.

\medskip
\noindent
{\bf Quantum hypothesis testing, entanglement, and additivity.}
One of our main applications for our adversarial Chernoff-Stein Lemma is in quantum hypothesis testing, when the states to be distinguished need not be i.i.d.  Indeed, a recurrent challenge in quantum information theory is that even apparently i.i.d. problems can involve complicated entangled states (meaning that they cannot be written as a convex combination of independent states).  For example, the quantum capacity of an i.i.d channel requires maximizing over all $n$-component inputs, and in general it is known that achieving the capacity requires using states that are entangled across channel uses~\cite{DSS98, Hastings-additivity}.  This phenomenon in quantum information theory---where information-theoretic quantities for $n$ copies of a system are not simply $n$ times the one-copy quantity---is known generally as the ``additivity'' problem.

A similar additivity problem arises in quantum hypothesis testing when we wish to distinguish many copies of a fixed state against a family of states that include non-i.i.d. states.  One  important example is the {\em relative entropy of entanglement} $E_R$, which is a method of quantifying the entanglement in a state $\rho$ as the minimum of its relative entropy with respect to any separable (i.e. non-entangled) state.  Here, $\rho$ is a multipartite state (e.g., shared between systems $A,B,C$) and separability refers to this partition.  However, to establish the asymptotic hypothesis testing rate of $\rho$ against separable states, we need to compare $n$ copies of $\rho$ against states that are separable with respect to our original partition, but not necessarily across the different copies.  In our example, $\rho^{\ot n}$ lives on systems $A_1,B_1,C_1,\ldots,A_n,B_n,C_n$ and we need to compare against states that are separable across the $A_1\ldots A_n : B_1\ldots,B_n : C_1\ldots C_n$ partition, but possibly entangled within the $A_1,\ldots,A_n$ systems (and the $B_1,\ldots,B_n$ and $C_1,\ldots,C_n$ systems).  Indeed, such entanglement across copies is known to be necessary to compute the relative entropy of entanglement, since examples exist~\cite{VW01} where $E_R(\rho\ot \rho) < 2 E_R(\rho)$.

\medskip
\noindent
{\bf Second main result: Restricted measurements.}
A further difficulty arises in the quantum setting when we consider restricted families of measurements, such as those arising from locality restrictions.  Here, too, the optimal measurement can be entangled across copies.  Moreover, since the hypothesis testing problem involves maximizing distinguishability over allowable measurements and minimizing over states, it is possible for entanglement to either increase or decrease the rate.

One particularly relevant example for our work
involves distinguishing many copies of a state $\rho$ against a general separable state, using measurements from a class (such as 1-LOCC, defined below) which preserves the set of separable states.  This distinguishability scenario was studied extensively in \cite{MWW09,Piani09, BCY11, LW12, BH-local}.  Though it may initially seem to be an obscure question, it has found applications to understanding the quantum conditional mutual information~\cite{BCY11}, to channel coding~\cite{MW12}, and to classical algorithms for separability testing~\cite{BCY-stoc} and the small-set expansion problem~\cite{BHKSZ12}.

The main result of \secref{quantum} provides quantum versions of the Chernoff-Stein Lemma and Chernoff's theorem for restricted measurements.  The main idea is that the deleterious effects of entanglement in this setting are no worse than what could be achieved by an adaptive adversary.  Thus quantum analogues follow as a corollary of our classical results.
One application of these results is an alternate proof of the improved strong subadditivity inequality of Li and Winter \cite{LW12}.
%Indeed, our work shows that their inequality is essentially equivalent to the apparently weaker result in \cite{BCY11}.

%We also consider the question of when there is a universal hypothesis tester for families of quantum states.  Classically, if $P$ and $Q$ are sets of probability distributions, then there %is a test that will distinguish any $p\in P$ from any $q\in Q$ with rate $\inf_{p\in P}\inf_{q\in Q}D(p\|q)$.  Quantumly, replacing $P$ and $Q$ with sets of density matrix is in general %not possible~\cite{qSanov}.  However, in the case of restricted measurements and compatible convex sets $P,Q$ of density matrices, we are able to show that indeed universal hypothesis %testers exist, as in the classical case.

\medskip
\noindent
{\bf Adaptive measurements.} The main results in our paper show that certain variants of hypothesis testing are no more difficult than the original problem.  Namely, in the classical case, we can allow an adversary to adaptively change the distribution without decreasing the hypothesis testing exponent, and in the quantum case, we can allow entangled states (under some conditions) while again achieving the same performance.  A natural complementary question is whether hypothesis testing rates can be improved by allowing the distinguisher a broader family of tests.  For example, classically one could consider the problem of distinguishing between two channels (stochastic maps) instead of between two probability distributions, and allowing the distinguisher to adaptively change the inputs to those channels.  In the quantum setting, one might consider the problem of distinguishing $\rho^{\ot n}$ from $\sigma^{\ot n}$ using entangled and/or adaptive measurements.

This sort of adaptivity often does not help.
When distinguishing two classical channels, there is no advantage to using varying inputs in the asymptotic case~\cite{Hayashi-disc}.  On the other hand, in the quantum case, when given $n$ copies of a state, entangled measurements across the $n$ copies  {\em can} improve the hypothesis-testing rate (see \eq{strict-super} and the surrounding discussion).  However, if measurements are forced to be separable across the $n$ copies, then adaptivity is again of no help~\cite{Hayashi-disc} (see also \cite[Section 3.5]{Hayashi-book}).  Thus the results in \cite{Hayashi-disc,Hayashi-book} concern quite a different model (adaptivity of the tester and not of the adversary), and are thus incomparable to ours.
Note that we also consider a different notion of separability, corresponding to cuts of the form $A_1\ldots A_n : B_1\ldots B_n$ instead of $A_1B_1 : A_2B_2 : \cdots : A_nB_n$.

%show that hypothesis testing does not become more difficult (a) in the classical setting when an adversary can adaptively change the distribution, or (b) in the quantum setting, for certain classes of measurements, when testing against classes that can contain entangled states.

% need to address Mohammad's comments about Section 2
% need to completely overhaul Section 3

%\documentclass[11pt]{article}
%\input{macros.tex}

%\begin{document}

\section{Hypothesis testing against an adaptive adversary}
\label{sec:adaptive}

\subsection{Asymmetric hypothesis testing}
\label{sec:asymh}

Fix two distributions $p$ and $q$ over a finite domain $\Omega$.
Given i.i.d. samples $X_1, X_2, \ldots, X_n$ from a distribution
$r \in \{p,q\}$, the goal is to design a test
which distinguishes the two possibilities based on the sample.
The classical {\em Chernoff-Stein Lemma} characterizes
the optimal exponential rate of error decay achievable in the one-sided error setting.

Consider any acceptance region $A_n \subseteq \Omega^n$ and the corresponding
error probabilities $\alpha_n = p^n({A_n^c})$ and $\beta_n =
q^n(A_n)$, where we use $S^c$ to denote the complement of a set $S$, and $p^n,q^n$ denote $n$ i.i.d. copies of $p,q$ respectively.
Then for $0 < \eps < 1$, define
$$
\beta_n^{\eps} \defeq \min_{\stackrel{A_n \subseteq \Omega^n}{\alpha_n \leq \eps}} \beta_n\,,
$$
and denote the optimal error exponent
$$
\hyp^\eps(p,q) \defeq %\lim_{\eps \to 0} 
\lim_{n\to\infty} \frac{-\log \beta_n^{\eps}}{n} \,.
$$

The following well-known lemma characterizes $\hyp^\eps$ in terms of the relative entropy
(see, e.g., Theorem 11.8.3 of \cite{CT06}).

\begin{lem}[Chernoff-Stein Lemma]
\label{lem:stein}
Consider any two distributions $p$ and $q$ over a finite domain $\Omega$.
Then $\hyp^\eps(p,q) = \kl{p}{q}$ for any $\eps\in (0,1)$.
\end{lem}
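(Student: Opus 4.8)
The plan is to prove the Chernoff-Stein Lemma by establishing matching upper and lower bounds on the error exponent $\hyp(p,q)$, showing both equal $\kl{p}{q}$.

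The plan is to prove $\hyp(p,q) = \kl{p}{q}$ by establishing two matching bounds: an achievability bound exhibiting a test whose type-2 error decays at rate $\kl{p}{q}$ with vanishing type-1 error, and a converse bound showing no admissible test can exceed this rate. The central object throughout will be the empirical log-likelihood ratio $L_n \defeq \frac{1}{n}\log\frac{p^n(X^n)}{q^n(X^n)} = \frac{1}{n}\sum_{i=1}^n \log\frac{p(X_i)}{q(X_i)}$, where logarithms are base $2$. Since $\kl{p}{q} < \infty$ forces $\supp(p) \subseteq \supp(q)$ and $\Omega$ is finite, each summand is a bounded random variable when $X_i \sim p$, so the weak law of large numbers gives $L_n \to \E_p\log\frac{p(X)}{q(X)} = \kl{p}{q}$ in probability under $p$. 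This convergence is the only probabilistic input the argument requires.

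For achievability I would fix $\delta > 0$ and take the acceptance region $A_n \defeq \{ x^n : L_n(x^n) > \kl{p}{q} - \delta \}$. The law of large numbers yields $\alpha_n = p^n(A_n^c) = \pr_p[L_n \le \kl{p}{q}-\delta] \to 0$, so eventually $\alpha_n < \eps$. On $A_n$ the defining inequality gives $q^n(x^n) < 2^{-n(\kl{p}{q}-\delta)}\, p^n(x^n)$, and summing over $A_n$ produces
\[
\beta_n = q^n(A_n) < 2^{-n(\kl{p}{q}-\delta)} p^n(A_n) \le 2^{-n(\kl{p}{q}-\delta)}.
\]
Thus $-\tfrac{1}{n}\log\beta_n^{\eps} > \kl{p}{q}-\delta$ for all large $n$ and every $\eps \in (0,1)$; sending $n\to\infty$, then $\eps\to 0$, then $\delta\to 0$ gives $\hyp(p,q) \ge \kl{p}{q}$.

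For the converse I would take \emph{any} region $A_n$ with $\alpha_n < \eps$ and intersect it with the relative-entropy typical set $B_n \defeq \{x^n : |L_n(x^n) - \kl{p}{q}| < \delta\}$. On $B_n$ one has $q^n(x^n) > 2^{-n(\kl{p}{q}+\delta)}\, p^n(x^n)$, so
\[
\beta_n \ge q^n(A_n \cap B_n) > 2^{-n(\kl{p}{q}+\delta)}\, p^n(A_n \cap B_n).
\]
A union bound gives $p^n(A_n \cap B_n) \ge p^n(A_n) - p^n(B_n^c) = (1-\alpha_n) - p^n(B_n^c) > (1-\eps) - o(1)$, where $p^n(B_n^c)\to 0$ by the law of large numbers. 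Hence $-\tfrac1n\log\beta_n \le \kl{p}{q} + \delta + \tfrac1n\log\frac{1}{(1-\eps)-o(1)}$, and letting $n\to\infty$ and then $\delta\to 0$ yields $\hyp(p,q) \le \kl{p}{q}$. Combining the two bounds gives the claim.

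The main obstacle is the converse, since there the region $A_n$ is chosen adversarially and need not resemble a likelihood-ratio test; the decisive move is intersecting with the typical set $B_n$ and absorbing the loss through the union bound, which is precisely where the hypothesis $\kl{p}{q}<\infty$ is needed to guarantee that $L_n$ is a well-defined bounded variable obeying the law of large numbers. The nested limits in the definition of $\hyp$ cause no trouble, because each inequality above holds for every fixed $\eps\in(0,1)$ before $\eps$ is sent to $0$.
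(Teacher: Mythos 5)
Your proposal is correct, and it is essentially the standard proof: the paper does not prove Lemma~\ref{lem:stein} itself but cites Theorem~11.8.3 of~\cite{CT91}, and your argument---a likelihood-ratio threshold test for achievability, plus intersecting an arbitrary acceptance region with the relative-entropy typical set $B_n$ for the converse---is precisely that textbook route, with the nested $n\to\infty$, $\eps\to 0$ limits handled correctly since your bounds hold for every fixed $\eps\in(0,1)$.
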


Here, $\kl{p}{q}$ is the {\em relative entropy,} given by
$$
\kl{p}{q} \defeq \sum_{x \in \Omega} p(x) \log \frac{p(x)}{q(x)}\,,
$$
and we take $\kl{p}{q} \defeq \infty$ when there is an $x \in \Omega$ such that $p(x) \neq 0$ but $q(x)=0$.

\medskip
\noindent
{\bf The adaptive setting.}  Suppose now that $P, Q \subseteq \mathbb R^{\Omega}$ are closed, convex sets of probability distributions.
%Consider a string $(X_1, X_2, \ldots, X_n) \in \Omega^n$.
An {\em adaptive $P$-strategy $\hat p$} is a collection of functions $\{\hat p_k : \Omega^{k-1} \to P : k=1,2,\ldots\}$.
Let $\mathcal A(P)$ denote the set of all adaptive $P$-strategies.
For $x \in \Omega^n$, we denote
$$
\hat p(x) \defeq \prod_{k=1}^n \hat p_k(x_1, \ldots, x_{k-1})(x_k)\,.
$$

As before, let $A_n \subseteq \Omega^n$ be an acceptance region, but now we define
$$
\alpha_n \defeq \sup_{\hat p \in \mathcal A(P)} \hat p(A_n^c)\,,
$$
and
$$
\beta_n^{\eps} \defeq \min_{\stackrel{A_n \subseteq \Omega^n}{\alpha_n \leq \eps}} \sup_{\hat q \in \mathcal A(Q)} \hat q(A_n)\,.
$$

For $\eps \in (0,1)$, we denote the {\em adversarial one-sided error exponent} by
$$
\adv^\eps(P,Q) \defeq %\lim_{\eps \to 0}
 \lim_{n \to \infty} \frac{- \log \beta_n^{\eps}}{n}\,.
$$
Observe that for single distributions $p,q \in \mathbb R^{\Omega}$, we have
$\adv^\eps(\{p\},\{q\})=\hyp^\eps(p,q)$.

\begin{thm}[Adversarial Chernoff-Stein]\label{thm:adaptive}
Let $\Omega$ be a finite domain.
For any closed, convex sets of probability distributions $P,Q \subseteq \mathbb
R^{\Omega}$ and for any $\eps \in (0,1)$,
we have
\be
\adv^\eps(P,Q) = \min_{p \in P, q \in Q} \kl{p}{q}. \label{eq:adv-rate}
\ee
\end{thm}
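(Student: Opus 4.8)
The plan is to let $(p^*,q^*)$ be a minimizer of $\kl{p}{q}$ over $P\times Q$; this exists by compactness of the simplex together with closedness of $P,Q$, and $D^*\defeq\kl{p^*}{q^*}<\infty$ by hypothesis. The test I would use is simply the i.i.d.\ log-likelihood-ratio test for this single pair,
$$A_n \defeq \Bigl\{x\in\Omega^n : \textstyle\sum_{k=1}^n \log\tfrac{p^*(x_k)}{q^*(x_k)} \ge n(D^*-\delta)\Bigr\},$$
for a small $\delta>0$; note that this depends only on $P$ and $Q$, as the theorem requires. Everything is driven by two first-order optimality conditions obtained by differentiating $\kl{p^*}{\cdot}$ and $\kl{\cdot}{q^*}$ along segments into the convex sets: for every $q\in Q$,
$$\textstyle\sum_x p^*(x)\,\tfrac{q(x)}{q^*(x)}\le 1,$$
and for every $p\in P$,
$$\textstyle\sum_x p(x)\log\tfrac{p^*(x)}{q^*(x)}\ge D^*.$$
These are exactly the inequalities needed to control the two error types against an arbitrary adaptive strategy.

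For the converse inequality $\adv(P,Q)\le D^*$, I would restrict the adversary to the constant strategies $\hat p\equiv p^*$ and $\hat q\equiv q^*$. Any $A_n$ feasible for the adversarial problem satisfies $(p^*)^n(A_n^c)\le\alpha_n<\eps$, while $\sup_{\hat q}\hat q(A_n)\ge (q^*)^n(A_n)$; hence $\beta_n^\eps$ is at least the i.i.d.\ optimal type-2 error for the pair $(p^*,q^*)$, and the ordinary Chernoff-Stein Lemma (\lemref{stein}) bounds the rate by $\kl{p^*}{q^*}=D^*$.

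The substance is the achievability bound $\adv(P,Q)\ge D^*$. For the type-2 error, fix $\hat q\in\mathcal A(Q)$; on $A_n$ we have $(q^*)^n(x)\le e^{-n(D^*-\delta)}(p^*)^n(x)$, so
$$\hat q(A_n)\le e^{-n(D^*-\delta)}\sum_{x\in\Omega^n}\hat q(x)\,\frac{(p^*)^n(x)}{(q^*)^n(x)}.$$
The crucial point is that this last sum is at most $1$: writing $\hat q(x)=\prod_k \hat q_k(x_{<k})(x_k)$ and performing the sum over $x_n$ innermost, the first optimality condition applied to $\hat q_n(x_{<n})\in Q$ gives $\sum_{x_n} p^*(x_n)\,\hat q_n(x_{<n})(x_n)/q^*(x_n)\le 1$; peeling off coordinates by backward induction collapses the entire sum to at most $1$. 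Hence $\sup_{\hat q}\hat q(A_n)\le e^{-n(D^*-\delta)}$.

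For the type-1 error, fix $\hat p\in\mathcal A(P)$ and set $Z_k=\log\frac{p^*(x_k)}{q^*(x_k)}$. The second optimality condition says $\E[Z_k\mid x_{<k}]\ge D^*$, so $\sum_k Z_k$ exceeds $nD^*$ plus a mean-zero martingale, giving $A_n^c\subseteq\{\sum_k(Z_k-\E[Z_k\mid x_{<k}])<-n\delta\}$. Since $\Omega$ is finite the increments take finitely many values, so an Azuma-type martingale concentration bound yields $\hat p(A_n^c)\le e^{-cn}\to0$ uniformly in $\hat p$, so $\alpha_n<\eps$ for large $n$. Combining the two estimates and sending $\delta\to0$ then $\eps\to0$ gives $\adv(P,Q)\ge D^*$. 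The step I expect to be the main obstacle is exactly this martingale bound in the presence of degenerate supports: if some $p\in P$ charges a symbol outside $\supp(p^*)$ then $Z_k=-\infty$ becomes possible, the increments are no longer bounded below, and the second optimality condition must be re-examined. I would handle this by verifying that the first-order conditions themselves constrain the relevant supports (forcing $q^*(x)=0$ at any such symbol), or, failing that, by a truncation or perturbation of $q^*$ that costs only $o(1)$ in the exponent.
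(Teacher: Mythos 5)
Your proposal is correct and follows essentially the same route as the paper's proof: the same Neyman--Pearson region built from the minimizing pair $(p^*,q^*)$, the same two first-order optimality conditions (the Pythagorean inequality $\kl{p}{q^*}\geq \kl{p}{p^*}+\kl{p^*}{q^*}$ giving $\E_p[\log\frac{p^*}{q^*}]\geq \kl{p^*}{q^*}$ for all $p\in P$, and the derivative condition $\sum_x q(x)\,p^*(x)/q^*(x)\leq 1$ for all $q\in Q$), the same backward-induction supermartingale/Markov bound for the type-2 error, and the same reduction to constant strategies plus the classical Chernoff--Stein Lemma for the converse; the only cosmetic difference is that you invoke Azuma where the paper is content with Chebyshev, since only $\alpha_n\to 0$ (not an exponential rate) is needed. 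The support degeneracy you flag at the end is a genuine subtlety, but note that the paper's own proof glosses over it in exactly the same way (its claim that the martingale differences are ``uniformly bounded'' implicitly assumes $L$ is finite on all reachable symbols), so flagging it counts in your favor rather than as a gap relative to the paper.
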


Thus in the asymptotic regime, adversarial adaptive hypothesis testing is no harder than the
i.i.d.~setting.  Indeed, when the distributions in $P$ have full support,
the hypothesis test used is a simple
Neyman-Pearson test for $p,q$ minimizing the RHS of \eq{adv-rate}.
This result was previously known in the non-adaptive case, where
it is sometimes referred to as
{\em composite hypothesis testing}~\cite{LevitanM02}.

\begin{proof}[Proof of Theorem~\ref{thm:adaptive}]
Let $p^* \in P$ and $q^* \in Q$ be minimizers of $\kl{p}{q}$ as $p$ and $q$ vary over
$P$ and $Q$, respectively.  Since $P$ and $Q$ are compact and $\kl{p}{q}$ is lower
semi-continuous, such $p^*, q^*$ exist.

By considering non-adaptive strategies that simply play $p^*$ and $q^*$ in each coordinate, one sees that
\ba
\adv^{\eps}(P,Q) & \leq \adv^{\eps}(\{p^*\},\{q^*\}) \twocol{\non \\ & }
= \hyp^{\eps}(p^*, q^*) = \kl{p^*}{q^*},
\label{eq:cs-easy}\ea
where the last
equality is \lemref{stein}.  Thus we need only prove that
\begin{equation}\label{eq:cs-geq}
   \adv^{\eps}(P,Q) \geq \kl{p^*}{q^*}.
\end{equation}
Note that if $\kl{p^*}{q^*}=0$, then \eqref{eq:cs-geq} holds vacuously, and thus we may assume that $P,Q$ are disjoint.

We will establish that \eqref{eq:cs-geq} holds
under the assumption
\begin{equation}\label{eq:support}
   \supp(p) = \supp(q) = \Omega \quad \forall p \in P, q \in Q.
\end{equation}

For any distribution $p$ over $\Omega$, write $p_{\theta} \defeq (1-\theta)p + \theta \frac{\1_{\Omega}}{|\Omega|}$, and
denote $P_{\theta} \defeq (1-\theta) P + \theta \frac{\1_{\Omega}}{|\Omega|}$ and
$Q_{\theta} \defeq (1-\theta) Q + \theta \frac{\1_{\Omega}}{|\Omega|}$.
Since $P$ and $Q$ are disjoint compact convex sets, $P_{\theta}$ and $Q_{\theta}$ are disjoint compact convex sets
for $\theta > 0$ sufficiently small.
At the end of the argument, we will prove the following lemma.

\begin{lemma}\label{lem:cs-eta}
   For every pair of compact convex sets $P,Q \subseteq \mathbb{R}^{\Omega}$ 
   and $\varepsilon \in (0,1)$, it holds that
   \ba
     \adv^{\varepsilon}(P,Q)
&     \geq \limsup_{\theta \to 0} \adv^{\varepsilon}(P_{\theta},Q_{\theta})
\twocol{\nn \\ & }
     = \limsup_{\theta \to 0} \min_{p_\theta\in P_\theta, q_\theta\in Q_\theta}
     D(p_\theta \| q_\theta)
     .
   \label{eq:rate-continuity}\ea
\end{lemma}

To extend this result to general compact convex $P$ and $Q$ without the need for $\theta>0$, we will need to better understand the last term in \eqref{eq:rate-continuity}.
% For fixed $P,Q$, we introduce the abbreviation $\Delta(\theta) :=  \min_{p_\theta\in P_\theta, q_\theta\in Q_\theta} D(p_\theta \| q_\theta)$, so the last term is $\limsup_{\theta\to 0} \Delta(\theta)$.
Let $(p_\theta^*,q_\theta^*)\in P_\theta\times Q_\theta$ be a pair achieving the minimum of $\kl{p_\theta}{q_\theta}$.  In fact either these minima are unique or $P\cap Q$ is nonempty thanks to the strict joint convexity of relative entropy (e.g. Theorem 7 of \cite{Ruskai02} ), but we will not use this fact.
%We claim that $\Delta(\theta)/(1-\theta)$ is nonincreasing, which will imply that $\Delta(\theta)$ is nonincreasing.  To prove this, let  $0<\theta_1<\theta_2$ and observe that $\frac{1-\theta_2}{1-\theta_1}p_{\theta_1} + (\theta_2-\theta_1) \frac{\1_{\Omega}}{|\Omega|} \in P_{\theta_2}$ for any $p_{\theta_1}\in P_{\theta_1}$.  The joint convexity of relative entropy then implies that $\Delta(\theta_2) \leq \frac{1-\theta_2}{1-\theta_1} \Delta(\theta_1)$, so $\frac{\Delta(\theta)}{1-\theta}$ is a decreasing function of $\theta$.
%
%Now return to \eqref{eq:rate-continuity}.
Since probability distributions over $\Omega$ are a compact set, there exists a positive decreasing sequence $\theta_1 > \theta_2 > \cdots$ such that $\lim_{n\to\infty} (p_{\theta_n}^*,q_{\theta_n}^*)$ exists. Call this limit $(p_0,q_0)$ and observe that $(p_0,q_0)\in P\times Q$.  The lower semi-continuity of  $(p,q) \mapsto \kl{p}{q}$ implies $\liminf_{n\to\infty}\kl{p_{\theta_n}}{q_{\theta_n}}\geq D(p_0\|q_0)$.  Then
\ba\limsup_{\theta \to 0} &\min_{p_\theta\in P_\theta, q_\theta\in Q_\theta}
D(p_\theta \| q_\theta)
\twocol{\nn \\ & }\geq
\liminf_{n\to\infty} \kl{p_{\theta_n}^*}{q_{\theta_n}^*}
\twocol{\nn \\ & }\geq D(p_0\|q_0)
\geq \kl{p^*}{q^*}.\ea
Combined with \eqref{eq:rate-continuity} this establishes
\be     \adv^{\varepsilon}(P,Q)
 \geq \kl{p^*}{q^*}.\ee
 for general compact convex $P$ and $Q$.
 (We believe that $p_\theta^*,q_\theta^*$ are differentiable functions of $\theta$ for $\theta\in (0,1)$ which would permit the above argument to be more direct, but we do not know an obvious proof of this claim.)
\medskip

So let us now assume \eqref{eq:support}.
For $n \in \mathbb N$ and $\delta > 0$, define an acceptance region
\begin{multline*}
A_{n,\delta} = \left\{ x \in \Omega^n :
\twocol{\right. \\ \left. }
  \log \frac{p^*(x_1) p^*(x_2) \cdots p^*(x_n)}{q^*(x_1) q^*(x_2) \cdots q^*(x_n)} \geq n(\kl{p^*}{q^*} - \delta)\right\}\,.
\end{multline*}
Our first goal is to argue that for every $\delta > 0$, we have
\begin{equation}\label{eq:step1}
   \lim_{n \to \infty}\, \inf_{\hat{p} \in \cA(P)} \hat p(A_{n,\delta}) = 1.
\end{equation}
We will then show that for any adaptive $Q$-strategy $\hat q$, we have
\begin{equation}\label{eq:step2}
\hat q(A_{n,\delta}) \leq e^{-n(\kl{p^*}{q^*} - \delta)}\,.
\end{equation}
Once these are proved, letting $\delta \to 0$ yields the desired claim.

Toward proving \eqref{eq:step1}, observe that, for every $\delta > 0$, $\lim_{n \to \infty} (p^*)^n(A_{n,\delta}) = 1$ by the law of large numbers.
The following lemma will allow us to show that the same is true for $\hat p \in \cA(P)$.

\begin{lemma}\label{lem:pythag}
   If \eqref{eq:support} holds, then for any $p \in P$,
$$
\sum_{x \in \Omega} p(x) \log \frac{p^*(x)}{q^*(x)} \geq \sum_{x \in \Omega} p^*(x) \log \frac{p^*(x)}{q^*(x)}\,.
$$
\end{lemma}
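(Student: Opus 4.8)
The plan is to recognize this inequality as the first-order optimality (or \emph{information-geometric Pythagorean}) condition for the projection of $q^*$ onto $P$. The key observation is that since the pair $(p^*,q^*)$ jointly minimizes $\kl{p}{q}$ over $P\times Q$, in particular $p^*$ minimizes $\kl{p}{q^*}$ over all $p\in P$ with $q^*$ held fixed (otherwise replacing $p^*$ by a better $p'\in P$ would strictly decrease $\kl{p'}{q^*}$, contradicting joint minimality). Fixing any $p\in P$, I would exploit convexity of $P$ by forming the path $p_t\defeq(1-t)p^*+tp\in P$ for $t\in[0,1]$ and setting $g(t)\defeq\kl{p_t}{q^*}$, which by the above is minimized over $[0,1]$ at the left endpoint $t=0$.

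Next I would differentiate. Writing $\delta(x)\defeq p(x)-p^*(x)$, a direct computation gives $g'(t)=\sum_{x}\delta(x)\bigl(\log\frac{p_t(x)}{q^*(x)}+1\bigr)$; since $p$ and $p^*$ are both probability distributions we have $\sum_x\delta(x)=0$, so the constant $+1$ terms cancel and $g'(t)=\sum_x\delta(x)\log\frac{p_t(x)}{q^*(x)}$. Evaluating at $t=0$ yields $g'(0^+)=\sum_x(p(x)-p^*(x))\log\frac{p^*(x)}{q^*(x)}$. Because $g$ attains its minimum over $[0,1]$ at the left endpoint, its one-sided derivative there satisfies $g'(0^+)\geq 0$, which is exactly the claimed inequality after moving the $p^*$ term to the other side. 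Equivalently, using the identity $\sum_x p(x)\log\frac{p^*(x)}{q^*(x)}=\kl{p}{q^*}-\kl{p}{p^*}$, the statement is the Pythagorean inequality $\kl{p}{q^*}\geq\kl{p^*}{q^*}+\kl{p}{p^*}$, whose nonnegative slack is precisely $\kl{p}{p^*}$.

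The main obstacle is the rigorous treatment of coordinates where $p^*(x)=0$ or $q^*(x)=0$, since there the differentiation above is not literally valid. I would dispatch these as follows. Since $\kl{p^*}{q^*}<\infty$ we have $\supp(p^*)\subseteq\supp(q^*)$. If some $x$ had $p(x)>0$, $p^*(x)=0$, and $q^*(x)>0$, then the $x$-term of $g(t)$ is $t\,p(x)\log\frac{t\,p(x)}{q^*(x)}$, whose derivative tends to $-\infty$ as $t\to 0^+$; this would force $g'(0^+)=-\infty$ and contradict the minimality of $p^*$, so no such coordinate exists and every $x\in\supp(p)\cap\supp(q^*)$ already lies in $\supp(p^*)$, keeping $g'(0^+)$ finite. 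The only remaining troublesome coordinates have $p(x)>0$ and $q^*(x)=0$; there the left-hand side of the lemma is $+\infty$ under the usual relative-entropy conventions and the inequality holds trivially. Once these boundary cases are isolated, the derivative computation applies verbatim to the interior coordinates and completes the proof.
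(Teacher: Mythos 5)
Your proposal is correct and takes essentially the same route as the paper: the paper proves the lemma by citing the Pythagorean inequality $\kl{p}{q^*} \geq \kl{p}{p^*} + \kl{p^*}{q^*}$ (Theorem 11.6.1 of Cover--Thomas) together with the identity $\kl{p}{q^*} - \kl{p}{p^*} = \sum_{x\in\Omega} p(x)\log\frac{p^*(x)}{q^*(x)}$, and your argument is exactly the standard proof of that cited theorem (the convex path $p_t$, the computation of $g'(0^+)$, and first-order optimality), made self-contained. Your explicit treatment of the support/boundary coordinates is a careful addition that the paper leaves implicit in the citation.
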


\begin{proof}
By Theorem 11.6.1 in \cite{CT06}, we have
$$
\kl{p}{q^*} \geq \kl{p}{p^*} + \kl{p^*}{q^*}\,.
$$
Observing that $\kl{p}{q^*} - \kl{p}{p^*} = \sum_{x \in \Omega} p(x) \log \frac{p^*(x)}{q^*(x)}$,
we see that this is precisely the desired inequality.
\end{proof}

Now, for $x \in \Omega$, define $L(x) \seteq \log \frac{p^*(x)}{q^*(x)}$.
Note that \eqref{eq:support} implies
%
%\lemref{pythag} implies that
%if $p^*(x)=0$ then $q^*(x)=0$, or $p(x)=0$ for every $p \in P$.
%Our assumption that $\kl{p^*}{q^*} < \infty$ implies that $q^*(x)=0 \implies p^*(x)=0$.
%Since $\Omega$ is finite, these facts together yield
\begin{equation}\label{eq:diffs}
   m = m(p^*,q^*) \seteq \max \left\{ |L(x)| : x \in \Omega \right\} < \infty.
\end{equation}
Moreover, \lemref{pythag} yields
\begin{equation}\label{eq:mg}
   \E_p [L(x)] \geq \E_{p^*}[L(x)] = \kl{p^*}{q^*}, \qquad \forall p \in P.
\end{equation}

Let $\hat{p} \in \cA(P)$ denote some adaptive $P$-strategy.
Consider a sequence of random variables $\{X_k\}$ distributed
according to $\hat{p}$ (i.e., $X_{k}$ is sampled according
to the measure $\hat{p}_k(X_1, X_2, \ldots, X_{k-1}) \in P$),
and the corresponding martingale difference sequence
%and observe that by \eqref{eq:mg},
$$D_k \defeq L(X_k) - \E [L(X_k) \mid X_1, \ldots, X_{k-1}]\,.$$
(Recall that the defining property of a martingale difference sequence is that $\E[|D_k|]$
is finite and $\E[D_k
\mid X_1,\ldots,X_{k-1}]=0$ for any $X_1,\ldots,X_{k-1}$.)
Since the differences are uniformly bounded (cf. \eqref{eq:diffs}),
orthogonality of martingale difference sequences yields
\[
   \E\left(\sum_{k=1}^n D_k\right)^2 = \sum_{k=1}^n \E[D_k^2] \leq 4 m^2 n\,.
\]

Chebyshev's inequality then implies that for any $\delta > 0$,
\begin{equation}\label{eq:cheby}
   \pr\left(\sum_{k=1}^n D_k \geq -\delta n\right) \geq 1 - \frac{4 m^2}{\delta^2} \frac{1}{n}\,.
\end{equation}
On the other hand, \eqref{eq:mg} implies that for each $k$, one has $\E [L(X_k) \mid X_1, \ldots, X_{k-1}] \geq \kl{p^*}{q^*}$.
Combining this with \eqref{eq:cheby} yields
\ba
\hat{p}(A_{n,\delta}) &
= \pr\left(\sum_{k=1}^n L(X_k) \geq n(\kl{p^*}{q^*}-\delta)\right)
\twocol{\nn \\ &  } \geq
\pr\left(\sum_{k=1}^n D_k \geq -\delta n\right) \geq 1 - 
\frac{4 m^2/\delta^2}{n}\,.
\label{eq:cheby2}\ea
Noting that the latter expression goes to $1$ as $n\to \infty$ (uniformly in $\hat{p}$) 
confirms \eqref{eq:step1}.
We now turn to verifying \eqref{eq:step2}.

\begin{lem}\label{lem:tech}
For any $q \in Q$, we have
$$
\sum_{x \in \Omega} q(x) \frac{p^*(x)}{q^*(x)} \leq 1\,.
$$
\end{lem}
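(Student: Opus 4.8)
The plan is to read the inequality off the first-order optimality condition satisfied by $q^*$. The key observation is that since $(p^*,q^*)$ jointly minimizes $\kl{p}{q}$ over $P \times Q$, the distribution $q^*$ in particular minimizes the convex function $q \mapsto \kl{p^*}{q}$ over the convex set $Q$. I would exploit this through a standard variational (perturbation) argument, in the same spirit as the preceding lemma but applied to the second argument of the relative entropy rather than the first.

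First I would fix an arbitrary $q \in Q$ and form the segment $q_t \defeq (1-t) q^* + t q$ for $t \in [0,1]$. Since $Q$ is convex, $q_t \in Q$ for every such $t$, so the scalar function
\be
f(t) \defeq \kl{p^*}{q_t} = \sum_{x \in \Omega} p^*(x) \log \frac{p^*(x)}{q_t(x)}
\ee
attains its minimum over $[0,1]$ at $t=0$. Consequently, provided $f$ is differentiable from the right at $0$, we must have $f'(0) \geq 0$.

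Next I would compute $f'(0)$ by differentiating term by term, using $\frac{d}{dt}\log q_t(x) = (q(x)-q^*(x))/q_t(x)$, which gives
\be
f'(0) = -\sum_{x \in \Omega} p^*(x)\,\frac{q(x)-q^*(x)}{q^*(x)} = 1 - \sum_{x \in \Omega} q(x)\,\frac{p^*(x)}{q^*(x)},
\ee
where I used $\sum_{x} p^*(x) = 1$. The condition $f'(0) \geq 0$ then rearranges to exactly the claimed bound $\sum_{x} q(x)\, p^*(x)/q^*(x) \leq 1$. (Equivalently, one may observe that $t \mapsto \sum_x p^*(x)\log q_t(x)$ is concave in $t$ and maximized at $t=0$, which forces its right derivative there to be nonpositive and yields the same conclusion.)

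The only step requiring genuine care --- and the one I expect to be the main obstacle --- is justifying the differentiation, which really amounts to controlling the domain over which we sum. Because $\kl{p^*}{q^*} < \infty$, the support of $p^*$ is contained in the support of $q^*$; restricting both sums to $\{x : p^*(x) > 0\}$ (all remaining terms vanish and may be discarded) guarantees $q^*(x) > 0$ there, hence $q_t(x) = (1-t)q^*(x) + t q(x) > 0$ for every $t \in [0,1]$, so each summand is smooth at $t=0$ and term-by-term differentiation of the finite sum is valid. This is precisely where the finiteness of $\Omega$ and of $\kl{p^*}{q^*}$ enter; with these in hand, the derivative computation above is routine.
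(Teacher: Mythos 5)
Your proof is correct and follows essentially the same route as the paper's: perturb $q^*$ along the segment $q_t = (1-t)q^* + tq$, use convexity of $Q$ and optimality of $q^*$ to get $\frac{d}{dt}\kl{p^*}{q_t}\big|_{t=0} \geq 0$, and rearrange the derivative to obtain the claimed bound. Your explicit handling of the support issue (restricting to $\{x : p^*(x) > 0\}$, justified by $\kl{p^*}{q^*} < \infty$) is a point the paper leaves implicit, but it does not change the argument.
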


\begin{proof}
For $\lambda \in [0,1]$, write
$q_{\lambda} = \lambda q + (1-\lambda) q^*$.  Since $q^*$ is the minimizer
of $\kl{p^*}{q}$ for $q$ in the convex set $Q$, we know that the derivative of $\kl{p^*}{q_{\lambda}}$
at $\lambda=0$ is non-negative.

Calculate
\bas
\frac{d}{d\lambda} &\kl{p^*}{q_{\lambda}} =
\sum_{x \in \Omega}p^*(x)
\frac{d}{d\lambda} \log \frac{p^*(x)}{q_{\lambda}(x)}
\\
&= - \sum_{x \in \Omega} p^*(x) \frac{d}{d\lambda} \log \left(\frac{\lambda q(x) + (1-\lambda) q^*(x)}{p^*(x)}\right)
\\
&= - \sum_{x \in \Omega} p^*(x) \frac{q(x)-q^*(x)}{\lambda q(x) + (1-\lambda) q^*(x)}\,.
\eas
Using the fact that the derivative is non-negative at $\lambda=0$ yields
$$
\sum_{x \in \Omega} \frac{p^*(x) q^*(x)}{q^*(x)} \geq \sum_{x\in \Omega} \frac{p^*(x) q(x)}{q^*(x)}\,,
$$
but the left-hand side is equal to 1, yielding the desired result.
\end{proof}

With the preceding lemma in hand, we finish the proof of \eqref{eq:step2}.
Fix some adaptive $Q$-strategy $\hat q$.
By Markov's inequality,
\begin{equation}\label{eq:markov}
\hat q(A_{n,\delta}) \leq e^{-n (\kl{p^*}{q^*} - \delta)}\, \E_{\hat q} \left[\frac{p^*(x_1) \cdots p^*(x_n)}{q^*(x_1) \cdots q^*(x_n)}\right].
\end{equation}
We now use the fact that,
by \lemref{tech}, the sequence of likelihood ratios $\prod_{i=1}^n
\frac{p^*(x_i)}{q^*(x_i)}$ is a supermartingale with respect to $\hat q$.  (Recall that a
sequence $X_1,X_2,\ldots$ is a supermartingale if $\E[X_n|X_1,\ldots,X_{n-1}] \leq
X_{n-1}$ for all choices of $n$ and $X_1,\ldots,X_{n-1}$.)
In particular,
\ba
 \E_{\hat q} & \left[\frac{p^*(x_1) \cdots p^*(x_n)}{q^*(x_1) \cdots q^*(x_n)}\right]
\twocol{\nn \\ & } =
 \E_{\hat q} \left[\frac{p^*(x_1) \cdots p^*(x_{n-1})}{q^*(x_1) \cdots q^*(x_{n-1})} \E_{\hat q_n(x_1, x_2, \ldots, x_{n-1})} \frac{p^*(x)}{q^*(x)} \right] \nonumber\\
 &\leq 
 \E_{\hat q} \left[\frac{p^*(x_1) \cdots p^*(x_{n-1})}{q^*(x_1) \cdots q^*(x_{n-1})} \right] \nonumber\\
 &\leq  \cdots \nonumber \\
 &\leq  1, \label{eq:cs-sm}
\ea
where in the second line we have applied \lemref{tech} to the distribution $\hat q_n(x_1, x_2, \ldots, x_{n-1}) \in Q$,
and then we have continued by induction.  Combining this with \eqref{eq:markov} completes our verification of \eqref{eq:step2}
and hence our proof of the theorem.

The proof of the theorem then follows from Lemma~\ref{lem:cs-eta}.
\end{proof}
\begin{proof}[Proof of Lemma~\ref{lem:cs-eta}]
  Let $(p^*_{\theta},q^*_{\theta}) \in P_{\theta} \times Q_{\theta}$ be a pair minimizing
  $\kl{p_{\theta}}{q_{\theta}}$ over $(p_{\theta},q_{\theta}) \in P_{\theta} \times Q_{\theta}$.
  Note that $\kl{p_{\theta}^*}{q_{\theta}^*} < \infty$ for $\theta > 0$.

  For any $\delta > 0$,
  define the acceptance region
\begin{multline*}
    A^{\theta}_{n,\delta} \seteq 
    \left\{ x \in \Omega^n :
      \twocol{\right. \\ \left.}
      \log \frac{p_{\theta}^*(x_1) p_{\theta}^*(x_2) \cdots p_{\theta}^*(x_n)}{q_{\theta}^*(x_1) q_{\theta}^*(x_2) \cdots q_{\theta}^*(x_n)} \geq n\left(\kl{p_{\theta}^*}{q_{\theta}^*} - \delta\right)\right\}\,.
\end{multline*}

Let $\hat{p}$ and $\hat{q}$ denote an adaptive $P$-strategy and $Q$-strategy, respectively, and
define adaptive $P_{\theta}$ and $Q_{\theta}$-strategies by
\begin{align*}
   (\hat{p}_{\theta})_k(x_1, \ldots, x_{k-1}) &= (\hat{p}_k(x_1,\ldots, x_{k-1}))_{\theta} \\
   (\hat{q}_{\theta})_k(x_1, \ldots, x_{k-1}) &= (\hat{q}_k(x_1,\ldots, x_{k-1}))_{\theta}.
\end{align*}
In other words, $x_k$ is obtained by sampling from $\hat{p}_k(x_1,\ldots, x_{k-1})$ with probability $1-\theta$ and from the uniform distribution with probability $\theta$.  This implies that
\be \frac 12 \|\hat{p} - \hat{p}_{\theta}\|_1\leq n\theta
\label{eq:p-hat-theta-dist}.\ee

   Now we use the super martingale property \eqref{eq:cs-sm} to obtain
   \begin{align}
      1 & \geq \E_{\hat{q}_{\theta}} \left[\prod_{i=1}^n \frac{p_{\theta}^*(x_i)}{q_{\theta}^*(x_i)}\right] \\
      &
      =\E_{\hat{q}_{\theta}} \left[\E_{(\hat{q}_{\theta})_n(x_1,\ldots,x_{n-1})} \left[\frac{p_{\theta}^*(x_n)}{q_{\theta}^*(x_n)}\right] \prod_{i=1}^{n-1} \frac{p_{\theta}^*(x_i)}{q_{\theta}^*(x_i)}\right] \nonumber \\
      &\geq
      (1-\theta) \E_{\hat{q}_{\theta}} \left[\E_{\hat{q}_n(x_1,\ldots,x_{n-1})} \left[\frac{p_{\theta}^*(x_n)}{q_{\theta}^*(x_n)}\right] \prod_{i=1}^{n-1} \frac{p_{\theta}^*(x_i)}{q_{\theta}^*(x_i)}\right]  \nonumber\\
      &\geq \cdots \nonumber\\
      &\geq (1-\theta)^n \E_{\hat{q}} \left[\prod_{i=1}^n \frac{p_{\theta}^*(x_i)}{q_{\theta}^*(x_i)}\right]  \nonumber \\
      &\geq (1-\theta)^n e^{n (\kl{p_{\theta}^*}{q_{\theta}^*} - \delta)} \hat{q}(A_{n,\delta}^{\theta}). \label{eq:cs-sm2}
   \end{align}
   Next, we use first \eqref{eq:p-hat-theta-dist}, then
 $m(\cdot,\cdot)$ from \eqref{eq:diffs} and finally  the bound
$m(p_{\theta}^*,q_{\theta}^*)\leq \log( |\Omega|/\theta)$ to establish
   \begin{align}
     \hat{p}(A_{n,\delta}^{\theta})
     &\geq \hat{p}_{\theta}(A_{n,\delta}^{\theta}) - \theta n
     \\ &\geq 1 - \frac{4m(p^*_{\theta},q^*_{\theta})^2}{\delta^2}\frac{1}{n} - \theta n,
     \\ & \geq 1 - \frac{4\log^2( |\Omega|/\theta)}{\delta^2}\frac{1}{n} - \theta n,
   \end{align}

This bound approaches 1 as long as $\theta$ decreases at an appropriate rate with $n$, say by
taking $\theta=1/n^2$. As a result,
   \[
      \lim_{n \to \infty} \inf_{\hat{p} \in \cA(P)} \hat{p}\left(A_{n,\delta}^{1/n^2}\right) = 1.
   \]
   Combining this with \eqref{eq:cs-sm2} shows that along the sequence of acceptance regions
   $\{A_{n,\delta}^{1/n^2}\}$, we have
   \begin{align*}
      \adv^{\eps}\twocol{&} (P,Q) \twocol{\nn \\ \! \!} &\geq \lim_{n \to \infty} \left(\kl{p_{1/n^2}^*}{q_{1/n^2}^*} - \delta + \log(1-n^{-2})\right) \\
                       &\geq \lim_{n \to \infty} \left(\adv^{\eps}(P_{1/n^2},Q_{1/n^2}) - \delta\right),
   \end{align*}
   where the second inequality follows from \eqref{eq:cs-easy}.
   Now taking $\delta \to 0$ completes the proof.
\end{proof}

\subsection{Chernoff information and symmetric hypothesis testing}

Suppose again that we have two distributions $p$ and $q$ over a finite domain $\Omega$.
We also have $n$ i.i.d. samples $X_1, X_2, \ldots, X_n$ from a distribution $r \in \{p,q\}$,
and a Bayesian hypothesis:  The samples come from $p$ with probability $\pi_p$
and from $q$ with probability $\pi_q$.  Consider a test $T_n \subseteq \Omega^n$.
If $(X_1, X_2, \ldots, X_n) \in T_n$, we declare that the sample came from $p$.

Our goal is to minimize the expected error
$$
\delta_n(T_n) \defeq \pi_p\, p^n(T_n^c) + \pi_q\, q^n(T_n)\,.
$$
In this case, the best achievable error exponent is
$$
\gamma(p,q) \defeq \lim_{n \to \infty} - \frac{1}{n} \min_{T_n \subseteq \Omega^n} \log \delta_n(T_n)\,.
$$
Observe that the constants $\pi_p$ and $\pi_q$ do not affect $\gamma(p,q)$.

For $\lambda \in (0,1)$, let us define
$$\Gamma^{\lambda}(p,q) \defeq -\log \sum_{x \in \Omega, p(x)q(x)>0} p(x)^{\lambda} q(x)^{1-\lambda}\,,$$
and
\begin{equation}\label{eq:gamma-star}
   \Gamma^*(p,q) \defeq \sup_{\lambda \in (0,1)} \Gamma^{\lambda}(p,q)\,.
\end{equation}
We have the following characterization due to Chernoff (see, e.g.,
Theorem 11.9.1 of \cite{CT06}).

\begin{thm}\label{thm:precher}
For any distributions $p$ and $q$ on $\Omega$,
one has
\[
   \gamma(p,q) = \Gamma^*(p,q)\,.
\]
Moreover, if $\supp(p)=\supp(q)$\footnote{The statement of Theorem 11.9.1 in \cite{CT06} does not include
the condition that $\supp(p)=\supp(q)$, but as was pointed out to us by an anonymous referee,
there are examples where the theorem is false without this assumption.}, then one has
$$
   \gamma(p,q) = \Gamma^*(p,q) = \kl{r}{p} = \kl{r}{q}\,,
$$
where $r$ is the distribution given by
$$
r(x) \defeq \frac{p(x)^{\lambda(p,q)} q(x)^{1-\lambda(p,q)}}{\sum_{y \in \Omega} p(y)^{\lambda(p,q)} q(y)^{1-\lambda(p,q)}}\,,
$$
and $\lambda(p,q)$ is the unique value of $\lambda \in (0,1)$ achieving the supremum in \eqref{eq:gamma-star}.
\end{thm}

We will prove a corresponding theorem in the adaptive setting.  To this end consider again two closed, convex sets of
distributions $P, Q \subseteq \mathbb R^{\Omega}$.  Define the {\em adversarial two-sided error exponent}
\bas
\advg\twocol{&}(P,Q) \defeq \twocol{\\ & } \lim_{n \to \infty} - \frac{1}{n} \min_{T_n \subseteq \Omega^n} \max_{\hat p,\hat q} \log \left(\hat p(T_n^c) + \hat q(T_n)\right)\,
\eas
where the maximum is over all adaptive $P$-strategies $\hat p$ and adaptive $Q$-strategies $\hat q$.

\def\ThmAdaptiveChernoff{
For any finite domain $\Omega$ and closed, convex sets of distributions $P,Q \subseteq \mathbb R^{\Omega}$,
we have
\begin{equation}\label{eq:adaptive-chernoff}
\advg(P,Q) = \min_{p \in P, q \in Q} \Gamma^*(p,q)\,.
\end{equation}}
\begin{thm}[Adversarial Chernoff's Theorem]
\label{thm:chernoff}
\ThmAdaptiveChernoff
\end{thm}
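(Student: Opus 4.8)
The plan is to mirror the proof of Theorem~\ref{thm:adaptive}. Since $\Omega$ is finite, $P$ and $Q$ are compact and $\Gamma^*$ is continuous, so let $(p^*,q^*)\in P\times Q$ attain $\min_{p,q}\Gamma^*(p,q)$; I assume this value lies in $(0,\infty)$, as otherwise the statement is degenerate. One inequality is immediate: by letting the adversary play $p^*$ and $q^*$ non-adaptively in every coordinate, the worst-case error $\max_{\hat p,\hat q}(\hat p(T_n^c)+\hat q(T_n))$ is at least $p^{*n}(T_n^c)+q^{*n}(T_n)$ for every test $T_n$, so $\advg(P,Q)\le\gamma(p^*,q^*)=\Gamma^*(p^*,q^*)$ by the classical Chernoff theorem. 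It therefore remains to exhibit a single test $T_n$, one not depending on the adversary, whose two error probabilities each decay at rate $\Gamma^*(p^*,q^*)$ against every adaptive strategy. Following the i.i.d.\ case, I would take the likelihood-ratio test $T_n=\{x:\sum_{k=1}^n L(x_k)\ge 0\}$ with $L(x)=\log\frac{p^*(x)}{q^*(x)}$; the threshold $0$ is the correct balancing point because, writing $\lambda^*=\lambda(p^*,q^*)$ and $r$ for the associated tilted distribution, optimality of $\lambda^*$ forces $\E_r[L]=0$ and $\Gamma^*(p^*,q^*)=\kl{r}{p^*}=\kl{r}{q^*}=-\log\sum_x p^*(x)^{\lambda^*}q^*(x)^{1-\lambda^*}$.

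The technical heart is a pair of ``supermartingale'' inequalities playing the role of Lemma~\ref{lem:tech}: for every $q\in Q$ and $p\in P$,
\[
\sum_{x\in\Omega} q(x)\Big(\tfrac{p^*(x)}{q^*(x)}\Big)^{\lambda^*}\le e^{-\Gamma^*(p^*,q^*)}\quad\text{and}\quad \sum_{x\in\Omega} p(x)\Big(\tfrac{q^*(x)}{p^*(x)}\Big)^{1-\lambda^*}\le e^{-\Gamma^*(p^*,q^*)}.
\]
To obtain these I would exploit the minimax structure of $\Phi(\lambda,p,q)=\Gamma^\lambda(p,q)=-\log\sum_x p(x)^\lambda q(x)^{1-\lambda}$. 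For fixed $\lambda\in[0,1]$ the map $(p,q)\mapsto p^\lambda q^{1-\lambda}$ is jointly concave, so $\Phi$ is convex in $(p,q)$; and since $\lambda\mapsto\sum_x q(x)(p(x)/q(x))^\lambda$ is a sum of exponentials in $\lambda$ and hence log-convex, $\Phi$ is concave in $\lambda$. Sion's minimax theorem then yields a saddle point $(\lambda^*,p^*,q^*)$ with $\min_{p,q}\max_\lambda\Phi=\max_\lambda\min_{p,q}\Phi$, at which $q^*$ minimizes $\Phi(\lambda^*,p^*,\cdot)$ over $Q$ and $p^*$ minimizes $\Phi(\lambda^*,\cdot,q^*)$ over $P$. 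Differentiating $\sum_x p^{*\lambda^*}q^{1-\lambda^*}$ along the segment $(1-\mu)q^*+\mu q$ at $\mu=0$ and using $\lambda^*\in(0,1)$ (which holds in the nondegenerate case) gives the first inequality; the second follows symmetrically.

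With these in hand the two error bounds follow by the exponential Markov inequality and iterated conditioning, exactly as in the final step of Theorem~\ref{thm:adaptive}. For any adaptive $Q$-strategy $\hat q$,
\[
\hat q(T_n)=\pr_{\hat q}\Big[\textstyle\sum_k L(X_k)\ge 0\Big]\le \E_{\hat q}\Big[e^{\lambda^*\sum_k L(X_k)}\Big]=\E_{\hat q}\prod_{k=1}^n\Big(\tfrac{p^*(X_k)}{q^*(X_k)}\Big)^{\lambda^*},
\]
and conditioning on $X_1,\dots,X_{n-1}$ and applying the first inequality to $\hat q_n(X_1,\dots,X_{n-1})\in Q$ bounds the inner conditional expectation by $e^{-\Gamma^*(p^*,q^*)}$; induction then gives $\hat q(T_n)\le e^{-n\Gamma^*(p^*,q^*)}$. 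Symmetrically, tilting by $1-\lambda^*$ and using the second inequality gives $\hat p(T_n^c)\le e^{-n\Gamma^*(p^*,q^*)}$ for every adaptive $P$-strategy $\hat p$. Hence $\max_{\hat p,\hat q}(\hat p(T_n^c)+\hat q(T_n))\le 2e^{-n\Gamma^*(p^*,q^*)}$, so $\advg(P,Q)\ge\Gamma^*(p^*,q^*)$, which together with the easy direction proves the claim.

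I expect the main obstacle to be the saddle-point step rather than the martingale estimates: one must verify that the balancing tilt $\lambda^*=\lambda(p^*,q^*)$ of the optimal pair is simultaneously the value at which $q^*$ and $p^*$ are individually optimal within $Q$ and $P$, which is precisely what the minimax exchange provides but does not follow from the minimality of $(p^*,q^*)$ alone. Care is also needed with the boundary cases $\lambda^*\in\{0,1\}$ and with pairs having non-overlapping supports, where $\Gamma^\lambda$ may degenerate; these correspond to the excluded regime $\Gamma^*(p^*,q^*)\in\{0,\infty\}$ and should be dispatched separately.
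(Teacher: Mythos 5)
Your proposal is correct in outline and follows the paper's skeleton almost exactly—the same easy direction via non-adaptive play of $(p^*,q^*)$, the same zero-threshold likelihood-ratio test $T_n$, the same pair of tilted one-step inequalities, and the same exponential-Markov bound with iterated conditioning—but it derives the crucial one-step inequality by a genuinely different mechanism. Your first displayed inequality is precisely the paper's \lemref{chopt}, since $\sum_x q^*(x)\bigl(p^*(x)/q^*(x)\bigr)^{\lambda^*}=e^{-\Gamma^*(p^*,q^*)}$; the paper, however, obtains it from minimality of $(p^*,q^*)$ alone. It sets $q_t=tq+(1-t)q^*$ and $f(\lambda,t)=\sum_x q_t(x)^{1-\lambda}p^*(x)^{\lambda}$, observes that $V(t)=\min_\lambda f(\lambda,t)=e^{-\Gamma^*(p^*,q_t)}$ satisfies $V'(0)\le 0$ by optimality of $q^*$, and then invokes an envelope theorem (\lemref{envelope}, with differentiability of $\lambda^*(t)$ supplied by $\partial^2_\lambda f>0$ and the implicit function theorem) to conclude $V'(0)=f_2(\lambda^*,0)$ at the \emph{fixed} tilt $\lambda^*$. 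This contradicts your closing remark: the simultaneous first-order optimality of $q^*$ (and symmetrically $p^*$) at the fixed $\lambda^*$ \emph{does} follow from minimality of the pair; the envelope theorem is exactly the device that annihilates the cross-term from the motion of $\lambda^*(t)$, so no minimax exchange is needed. Your Sion route—using convexity of $\Gamma^\lambda(p,q)$ in $(p,q)$ via joint concavity of the weighted geometric mean, and concavity in $\lambda$ via log-convexity of sums of exponentials—is sound and more conceptual, identifying $\lambda^*$ globally as a saddle rather than through a local second-derivative argument; but it carries a heavier verification burden: $\Gamma^\lambda$ is extended-real-valued because individual pairs in $P\times Q$ may have disjoint supports even when $\min\Gamma^*$ is finite (so this is \emph{not} subsumed by your excluded regime $\Gamma^*(p^*,q^*)\in\{0,\infty\}$), the behavior at $\lambda\in\{0,1\}$ depends on the $0^0$ convention, and one must argue attainment on both sides to upgrade Sion's value equality to an actual saddle point. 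Both routes share the same gloss at the support boundary, where the one-sided derivative of $q\mapsto\sum_x p^*(x)^{\lambda^*}q(x)^{1-\lambda^*}$ can be $+\infty$ along directions entering $\{x: q^*(x)=0,\ p^*(x)>0\}$—which in fact works in your favor by ruling such directions out. Net assessment: your proof is correct modulo the technicalities you flag; the paper's envelope-theorem argument is the more elementary and self-contained of the two, while yours buys the clean minimax picture at the price of stronger hypotheses.
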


\begin{proof}
We may assume that $P$ and $Q$ are compact; the general case can be reduced to this one
by considering exhaustions of $P$ and $Q$ by compact convex sets.
Assume $P$ and $Q$ are disjoint, since otherwise $\advg(P,Q)=\min_{p\in P,q\in Q} \Gamma^*(p,q)=0$.
Let $p^* \in P, q^* \in Q$ be some pair that minimizes $\Gamma^*(p,q)$ over $p \in P, q\in Q$.
First,
we have $$\advg(P,Q) \leq \advg(\{p^*\},\{q^*\}) = \gamma(p^*,q^*) = \Gamma^*(p^*,q^*)\,,$$
where the latter equality is given by Theorem \ref{thm:precher}.  Thus we
are left to prove $\advg(P,Q) \geq \Gamma^*(p^*,q^*)$.

\medskip

Let us first 
assume that $\supp(p)=\supp(q)=\Omega$ for all $p \in P$ and $q \in Q$.
After the argument, we will reduce the general case to this one.
Consider $p \neq q$.
Define $F_{p,q} : [0,1] \to \mathbb{R}$ by
\[
   F_{p,q}(\lambda) \seteq \sum_{x\in \Omega} p(x)^{\lambda} q(x)^{1-\lambda}\,,
\]
and calculate
\begin{align*}
   F'_{p,q}(\lambda) &= \sum_{x\in\Omega} p(x)^{\lambda} q(x)^{1-\lambda} \log \frac{p(x)}{q(x)}\,, \\
   F''_{p,q}(\lambda) &= \sum_{x \in \Omega} p(x)^{\lambda} q(x)^{1-\lambda} \left(\log \frac{p(x)}{q(x)}\right)^2.
\end{align*}
Since $p \neq q$,
$F''_{p,q}(\lambda) > 0$ for all $\lambda \in (0,1)$.
Since additionally $\supp(p)=\supp(q)$,
\begin{align*}
   F_{p,q}'(0) &= \kl{p}{q} > 0 \\
   F_{p,q}'(1) &= -\kl{q}{p} < 0\,.
\end{align*}
We conclude that $F_{p,q}(\lambda)$ is minimized at a unique value $\lambda \in (0,1)$.
Denote this value by $\lambda(p,q)$ and observe that $\Gamma^*(p,q) = \Gamma^{\lambda(p,q)}(p,q)$.
Let $\lambda^* \defeq \lambda(p^*,q^*)$.

Define now
$$
T_n \defeq \left\{ x \in \Omega^n : \prod_{i=1}^n p^*(x_i) \geq \prod_{i=1}^n q^*(x_i) \right\}\,.
$$
Fix also an adaptive $P$-strategy $\hat p$ and an adaptive $Q$-strategy $\hat q$.  We will show that
\begin{equation}\label{eq:twosided}
   \Gamma^*(p^*,q^*) \leq \lim_{n \to \infty} \frac{-\log(\hat p(T_n^c) + \hat q(T_n))}{n}\,.
\end{equation}

We will need to employ the following easy variant of the ``envelope theorem.''

\begin{lemma}\label{lem:envelope}
Consider a differentiable function $f : [0,1]^2 \to \mathbb R$.  Define $V(t) = \inf_{\lambda \in [0,1]} f(\lambda,t)$ and
suppose that for every $t \in [0,1]$, there is a unique $\lambda^*(t) \in (0,1)$ such that $V(t) = f(\lambda^*(t), t)$.
If $\lambda^*$ is differentiable at $t \in [0,1]$, then $V'(t) = f_2(\lambda^*(t), t)$ where $f_2$ is the partial derivative
of $f$ with respect to its second argument.
\end{lemma}

\begin{proof}
Let $f_1$ denote the partial derivative of $f$ with respect to its first argument.
Writing $V(t) = f(\lambda^*(t),t)$ and applying the chain rule yields
$$V'(t) =f_2(\lambda^*(t), t) + f_1(\lambda^*(t),t) \frac{d}{dt} \lambda^*(t).$$
The second term is zero because $f_1(\lambda^*(t),t)=0$ by optimality of $\lambda^*(t)$.
\end{proof}

\begin{rem}\label{rem:envelope}
Observe that if $f(\lambda,t)$ has $\frac{\partial^2}{\partial \lambda^2} f(\lambda,t) > 0$ for some $t \in [0,1]$,
then $\lambda^*(t)$ is the unique solution of $\frac{\partial}{\partial \lambda} f(\lambda,t) = 0$
and is differentiable by the implicit function theorem.  Note that the assumptions of \lemref{envelope}
can be relaxed considerably; see, e.g., \cite[Ch. 3]{milgrom2004putting}.
\end{rem}

This allows us to prove the following.

\begin{lem}\label{lem:chopt}
For any distribution $q \in Q$, one has
$$
\sum_{x \in \Omega} q(x) \frac{p^*(x)^{\lambda^*}}{q^*(x)^{\lambda^*}} \leq \sum_{x \in \Omega} q^*(x) \frac{p^*(x)^{\lambda^*}}{q^*(x)^{\lambda^*}}\,.
$$
\end{lem}

\begin{proof}
For $t \in [0,1]$, define a distribution $q_t \defeq t q + (1-t)q^* \in Q$.
Moreover, define a function $f : [0,1]^2 \to \mathbb R$ by $$f(\lambda,t) \defeq F_{p^*,q_t}(\lambda)\,.$$
As we have already observed, for every fixed value of $t \in [0,1]$, it holds that
$\lambda(p^*,q_t) \in (0,1)$ is the unique minimizer of $f(\lambda,t)$.

Let $f_2$ be the partial derivative of $f$ in its second argument; then one computes:
$$
f_2(\lambda,t) = \sum_{x \in \Omega} (q(x)-q^*(x)) (1-\lambda) q_t(x)^{-\lambda} p^*(x)^{\lambda}\,.
$$
If we let $V(t) = \min_{\lambda \in (0,1)} f(\lambda,t)$, then optimality of $q^*$ implies $V'(0) \leq 0$.
But now \lemref{envelope} (in conjunction with Remark \ref{rem:envelope}) yields
\begin{eqnarray*}
0 &\geq& V'(0) = f_2(\lambda^*, 0) \\
%&=& \lim_{t \to 0} \sum_{x \in \Omega} (q(x)-q^*(x)) (1-\lambda_t^*) q_t(x)^{-\lambda_t^*} p^*(x)^{\lambda_t^*} \\
&=& \sum_{x \in \Omega} (q(x)-q^*(x)) (1-\lambda^*) q^*(x)^{-\lambda^*} p^*(x)^{\lambda^*}\,.
\end{eqnarray*}
Rearranging yields the desired claim.
\end{proof}

The preceding lemma shows that the sequence $\prod_{i=1}^n \frac{p^*(x_i)^{\lambda^*}}{q^*(x_i)^{\lambda^*}}$ is a supermartingale with
respect to $\hat q$.
Thus we can write
\ba
\E_{\hat q}\twocol{&} \left[\prod_{i=1}^n
  \frac{p^*(x_i)^{\lambda^*}}{q^*(x_i)^{\lambda^*}}\right]
\twocol{\nn \\} &= \nonumber
\E_{\hat q}\left[\prod_{i=1}^{n-1} \frac{p^*(x_i)^{\lambda^*}}{q^*(x_i)^{\lambda^*}} \E_{\hat q_n(x_1, \ldots, x_{n-1})} \frac{p^*(x_n)^{\lambda^*}}{q^*(x_n)^{\lambda^*}}\right] \nonumber \\
&\leq e^{-\Gamma^*(p^*,q^*)} \E_{\hat q}\left[\prod_{i=1}^{n-1} \frac{p^*(x_i)^{\lambda^*}}{q^*(x_i)^{\lambda^*}}\right] \nonumber
\\
&\leq \cdots \nonumber \\
&\leq e^{-n\Gamma^*(p^*,q^*)}\,, \label{eq:sm}
\ea
where in the second line we have used \lemref{chopt} along with
the fact that $q=\hat q_n(x_1, \ldots, x_{n-1}) \in Q$,
and then we have continued by induction.

By Markov's inequality, this implies $\hat q(T_n) \leq e^{- n  \Gamma^*(p^*,q^*)}$.
By the symmetry of the preceding argument with respect to $P$ and $Q$, the same bound of $\hat p(T_n^c) \leq e^{- n \Gamma^*(p^*,q^*)}$ holds for $\hat p$.
Combining these yields $\advg(P,Q) \geq \Gamma^*(p^*,q^*)$, completing the proof.

\medskip
\noindent
{\bf General $P$ and $Q$.}
Let us recall from Section~\ref{sec:asymh} the notation $p_{\theta}$ for $p \in \Omega$, and the sets
$P_{\theta}$ and $Q_{\theta}$.

\begin{lemma}\label{lem:j1}
      For any $\theta > 0$ sufficiently small, it holds that
      \[
         \advg(P,Q) \geq \advg(P_{\theta},Q_{\theta}) - \theta\,.
      \]
   \end{lemma}

   \begin{lemma}\label{lem:j2}
   It holds that
   \[
      \liminf_{\theta \to 0} \min_{(p,q) \in P \times Q} \Gamma^*(p_{\theta},q_{\theta}) \geq \min_{(p,q) \in P \times Q} \Gamma^*(p,q)\,.
   \]
\end{lemma}

Let us first use them to complete the proof of our desired result for general $P$ and $Q$
using the result for $P_{\theta}$ and $Q_{\theta}$.  Employ Lemma \ref{lem:j1} and then Lemma \ref{lem:j2} to write:
\ba
\advg(P,Q) & \geq \liminf_{\e \to 0} \advg(P_{\theta},Q_{\theta})
\twocol{\nn \\ & }= 
      \liminf_{\theta \to 0} \min_{(p,q) \in P \times Q}
      \Gamma^*(p_{\theta},q_{\theta})
\twocol{\nn \\ & }      \geq \min_{(p,q) \in P \times Q} \Gamma^*(p,q)\,.
\ea
This concludes the proof of \thmref{chernoff}, modulo the proofs of Lemmas \ref{lem:j1}
and \ref{lem:j2}.
\end{proof}

\begin{proof}[Proof of Lemma \ref{lem:j1}]
      Let $(p^*_{\theta},q^*_{\theta}) \in P_{\theta} \times Q_{\theta}$ be a pair minimizing
      $\Gamma^*(p_{\theta},q_{\theta})$ over $(p_{\theta},q_{\theta}) \in P_{\theta} \times Q_{\theta}$,
      and let $\lambda^*$ denote their optimal exponent.
      Define the test
      \begin{multline*} T_n(p_{\theta}^*,q_{\theta}^*) \seteq
        \twocol{\\} \{ x \in \Omega^n : p_{\theta}^*(x_1) \cdots p_{\theta}^*(x_n) \geq q_{\theta}^*(x_1) \cdots q_{\theta}^*(x_n) \}\,.\end{multline*}

      Let $\hat{q}$ denote an adaptive $Q$-strategy.
      We define an adaptive $Q_{\theta}$-strategy
   $\hat{q}_{\theta}$ by
   \[
      (\hat{q}_{\theta})_k(x_1, \ldots, x_{k-1}) = (\hat{q}_k(x_1,\ldots, x_{k-1}))_{\theta}.
   \]
   Then we have:
   \begin{align*}
      \exp & \left(-n \Gamma^*(p_{\theta}^*, q_{\theta}^*)\right)
      \geq \E_{\hat{q}_{\theta}}\left[\prod_{i=1}^n \frac{p_{\theta}^*(x_i)^{\lambda^*}}{q_{\theta}^*(x_i)^{\lambda^*}}\right]
      \twocol{\\ & } =
      \E_{\hat{q}_{\theta}} \left[ \E_{(\hat{q}_{\theta})_n(x_1,\ldots,x_{n-1})}\left[\frac{p_{\theta}^*(x_n)^{\lambda^*}}{q_{\theta}^*(x_n)^{\lambda^*}}\right]
      \prod_{i=1}^{n-1} \frac{p_{\theta}^*(x_i)^{\lambda^*}}{q_{\theta}^*(x_i)^{\lambda^*}}\right],
\end{align*}
where the first inequality uses the supermartingale inequality \eqref{eq:sm}.
Moreover, for every $(x_1,\ldots,x_{n-1}) \in \Omega^{n-1}$,
\begin{align*}
   \E_{(\hat{q}_{\theta})_n(x_1,\ldots,x_{n-1})} & \left[\frac{p_{\theta}^*(x_n)^{\lambda^*}}{q_{\theta}^*(x_n)^{\lambda^*}}\right] 
\twocol{\\}\geq \twocol{&} (1-\theta)\, \E_{\hat{q}_n(x_1,\ldots,x_{n-1})}\left[\frac{p_{\theta}^*(x_n)^{\lambda^*}}{q_{\theta}^*(x_n)^{\lambda^*}}\right],
\end{align*}
thus continuing inductively yields

\begin{align*}
   \exp\twocol{&}\left(-n \Gamma^*(p_{\theta}^*, q_{\theta}^*)\right) \onecol{&}\geq
   \E_{\hat{q}_{\theta}} \left[\prod_{i=1}^n \frac{p_{\theta}^*(x_i)^{\lambda^*}}{q_{\theta}^*(x_i)^{\lambda^*}}\right] \\ 
   &\geq (1-\theta)^n\,
      \E_{\hat{q}} \left[\prod_{i=1}^n \frac{p_{\theta}^*(x_i)^{\lambda^*}}{q_{\theta}^*(x_i)^{\lambda^*}}\right] \\
               &\geq
\onecol{   (1-\theta)^n 
     \hat{q}
               \left(\left\{ (x_1,\ldots,x_n) \in \Omega^n
     : p_{\theta}^*(x_1)\cdots p_{\theta}^*(x_n) \geq q_{\theta}^*(x_1)\cdots q_{\theta}^*(x_n)\right\}\right)  \\ 
   & = } (1-\theta)^{n}\, \hat{q}\left(T_n(p_{\theta}^*, q_{\theta}^*)\right).
\end{align*}
Doing the symmetric analysis with an adaptive $P$-strategy yields
\bas
\Gamma^* & (p_{\theta}^*, q_{\theta}^*) \leq
\twocol{& \\ &  } \theta - \frac{1}{n} \max_{\hat{p},\hat{q}} \left[\log \left(\vphantom{\bigoplus}\frac{\hat{q}(T_n(p_{\theta}^*, q_{\theta}^*))+
   \hat{p}(T_n^c(p_{\theta}^*, q_{\theta}^*))}{2}\right)\vphantom{\bigoplus}\right],
\eas
and then taking the limit as $n \to \infty$ gives
\[\advg(P_{\theta},Q_{\theta}) = \Gamma^*(q_{\theta}^*, p_{\theta}^*)
  \leq \advg(P,Q) + \theta\,.\qedhere
\]
\end{proof}

\begin{proof}[Proof of Lemma \ref{lem:j2}]
   Let $\{ (p^n,q^n) \in P \times Q : n=1,2,\ldots\}$ denote a sequence of distributions,
   and consider a sequence $\{\theta_n\}$ with $\theta_n \to 0$ as $n \to \infty$.
   Since $P \times Q$ is compact, we may pass to a subsequence where $(p^n,q^n)$ converges.
   Let $(\bar{p},\bar{q}) \in P \times Q$ be the limit.
   Note that $(\bar{p},\bar{q})$ is also a limit of the sequence $\{ (p^n_{\theta_n},q^n_{\theta_n}) \}$.

   Observe now that $\Gamma^*(p,q)$ is a supremum of continuous functions, and thus $(p,q) \mapsto \Gamma^*(p,q)$
   is lower semi-continuous.
   This implies that
   \[
      \lim_{n \to \infty} \Gamma^*(p^n_{\theta_n}, q^n_{\theta_n}) \geq \Gamma^*(\bar{p},\bar{q})\,,
   \]
   completing the proof.
\end{proof}

\section{Distinguishing quantum states with restricted measurements}
%\section{A restricted-measurement quantum Stein's Lemma}
\label{sec:quantum}

A central problem in quantum information is to distinguish between a pair of quantum
states $\rho$ and $\sigma$.    As usual, there is a tradeoff between
errors of type 1 and 2, i.e., mistaking $\rho$ for $\sigma$ and vice
versa.  The quantum Neyman-Pearson lemma states that the optimal
tradeoff curve between errors of type 1 and 2 is achieved by choosing
$$\mathcal{M} = \{\theta\rho - \sigma \geq 0\},$$
for some $\theta\geq 0$, where $\{X \geq 0\}$ denotes the projector onto the eigenvectors of
$X$ with nonnegative eigenvalue.
The estimation strategy is  then to perform the measurement
$\{\mathcal{M},I-\mathcal{M}\}$ and guess $\rho$ upon obtaining the
outcome corresponding to POVM element $\mathcal{M}$ or $\sigma$
upon obtaining the outcome corresponding to $I-\mathcal{M}$.

{\em Remark on terminology:} We briefly introduce some notation here, and additional
background and definitions for the reader unfamiliar with quantum information theory can
be found in Appendix \ref{sec:background}.  The finite domain $\Omega$ from
\secref{adaptive} is replaced with $V=\bbC^d$ with the standard
Euclidean inner product, and we denote the
set of density operators on $V$ by $\cD(V)$.  Let $\cL(V)$ denote linear operators on $V$
and let $E(V) = \{ \cM \in L(V) : 0 \leq \cM \leq I\}$ be the space of POVM elements.  A {\em
  measurement} $\cM = (\cM_1,\cM_2,\ldots)$ is a collection of POVM elements that sum to $I$,
and $\cM(\rho)=(\tr(\cM_1\rho), \tr \cM_2\rho),\ldots)$ refers to the probability distribution
of measurement outcomes resulting from applying $\cM$ to $\rho$.  For our purposes we will
consider both two-outcome measurements and measurements with finitely
many nonzero POVM elements. % taken WLOG to be $1,2,3,\ldots,$.  
Call these sets $E_2(V)$ and $E_{\bbN}(V)$
respectively.  For $E_2(V)$, the measurement $\{\cM,I-\cM\}$ is of course determined by
the first POVM element $\cM$ and so where it is not ambiguous we will
use $\cM$ to refer to the measurement.  Further background on quantum
states and measurements can be found in the appendix.

One well-known case of state distinguishability 
is when $\rho$ and $\sigma$ have prior probabilities
$p$ and $1-p$, respectively, and we wish to minimize the total probability of error.  In
this case the optimal measurement $\mathcal{M}$ is given by $\mathcal{M} = \{p \rho - (1-p)\sigma \geq
0\},$ and the probability of error is $\frac{1-\|p\rho -
  (1-p)\sigma\|_1}{2}$, where $\|\cdot\|_1$ denotes the Schatten
1-norm.   (Here $\cM$ corresponds to guessing ``$\rho$'' and $I-\cM$ to guessing ``$\sigma$''.)
 The familiar {\em trace distance}
$\frac{1}{2}\|\rho-\sigma\|_1$ corresponds to the case $p=1/2$.

\bigskip
\noindent
We modify this basic problem of state
distinguishability in three (simultaneous) ways:
\benum
\item We consider only measurements $\cM$ from some restricted
class $M\subseteq E_2(V)$.
\item  We allow $\rho,\sigma$ to be drawn adversarially from some
       sets $R,S $, respectively.  (This means that an adversary chooses
       $\rho,\sigma$ in each round with knowledge of  all
       previous measurement outcomes.)
% we allow $\sigma$ to be drawn adversarially from some set $S$;
\item We consider the asymptotic limit in which $M,R,S$ are replaced
  by families $\famM=(M^1,M^2,\ldots), \famR=(R^1,R^2,\ldots), \famS=(S^1,S^2,\ldots)$ with $M^n, R^n, S^n$ describing measurements and states on $V^{\ot
  n}$. Our goal is then, for each $n$, to find a
  measurement $\cM \in M^n$ that will effectively distinguish any
  state $\rho\in R^n$ from any state $\sigma \in S^n$.
\eenum

These changes render the problem a good deal more abstract, and
introduce a large number of new parameters.  Thus, it may be helpful
to keep in mind a prototypical example that was one of the motivations
for this work.  For some fixed bipartite state $\rho$ over $A \ot B$,
let $R^n$ be the singleton set $\{\rho^{\ot n}\}$, and let $S^n \defeq
\Sep(A^{\ot n} : B^{\ot n})$.  This corresponds to studying the
asymptotic distinguishability of many copies of $\rho$ from a
separable state on the same number of systems.
For this special case, we introduce the notation 
\bas \famrho & := (\{\rho\}, \{\rho^{\ot 2}\}, \ldots) \\
\famSep(A:B) &  := (\Sep(A:B), \Sep(A^{\ot 2}:B^{\ot 2}), \ldots).
\eas
   Where the context is
understood, we will often omit the reference to $A,B$ and simply write
$\Sep$ or $\famSep$.
%Here ``separable'' refers to states that are separable across the
%$A^n:B^n$ cut, but may be arbitrarily entangled within the $A^n$
%systems and within the $B^n$ systems.
Finally, we will consider a restricted class of measurements
$\famM$, such as the class of 1-LOCC measurements (as discussed in
\cite{Piani09, BCY11, LW12, BH-local}).

%Thus, the asymptotic 1-LOCC
%relative entropy of $\rho$ from the set of separable states is denoted
%$\mathbb{M_{\mathrm{1-LOCC}}}E_R^{D_{\mathrm{SEP}}}(\rho)$

\subsection{Background on restricted quantum measurements}
\label{sec:q-background}

We begin by introducing notation, describing
known results on restricted-measurement distinguishability, and
presenting a few small new results to help clean up the landscape.
In \secref{q-stein}, we describe our restricted-measurement version of
the quantum Stein's Lemma, and in \secref{q-SSA} we give an application to
quantum conditional mutual information.
% \secref{q-chernoff} describes a restricted-measurement version of the
% quantum Chernoff bound.

\subsubsection{Quantum Stein's Lemma}
\label{sec:q-stein-intro}

If $\rho,\sigma$ are density matrices on a space $V$, then the {\em relative
entropy of $\rho$ with respect to $\sigma$} is
\be \kl{\rho}{\sigma} \defeq
\tr \left( \rho(\log \rho - \log\sigma) \right) .\ee
If $\ker(\sigma) \nsubseteq \ker(\rho)$, we take $\kl{\rho}{\sigma} \defeq \infty$.

 Following the classical case, we
define an {\em acceptance operator} $\cM^n \in E(V^{\ot n})$ (analogous to the
acceptance region $T_n$),
%to be an operator on $V^{\ot n}$ satisfying $0\leq \cM_n \leq I$
%(i.e., a POVM element), 
with
corresponding error probabilities $\alpha_n =\tr \left( (I-\cM^n) \rho^{\ot n} \right)$
and $\beta_n \defeq \tr \left( \cM^n \sigma^{\ot n} \right)$. Again we can define
$\beta_n^\eps \defeq \min\{\beta_n : \alpha_n \leq \eps\}$ and
\be
E(\rho,\sigma)
\defeq \lim_{\eps\ra 0} \lim_{n \ra \infty} \frac{-\log\beta_n^\eps}{n}
\label{eq:E-def}\ee

Hiai and Petz~\cite{HP91} proved the following quantum analogue of
\lemref{stein}:
\be \kl{\rho}{\sigma} = E(\rho, \sigma).\label{eq:q-stein}\ee
See also \cite{BR12, Li12} for elegant and elementary proofs.  The
``strong converse'' of \eq{q-stein} was proved by Ogawa and
Nagaoka~\cite{ON00}, and can be thought of as showing that
\eq{q-stein} holds when the limit of $\eps\ra 0$ in \eq{E-def} is replaced by any
fixed $\eps\in (0,1)$.

\subsubsection{Asymptotic composite hypothesis testing}
An important generalization of hypothesis testing is when $\rho$ and
$\sigma$ are chosen from sets $R,S\subseteq \cD(V)$, respectively, and
we need to design our test with knowledge only of $R$ and $S$.
%In this paper we always require that $R,S$ be closed and  convex.
This problem is
known as {\em composite hypothesis testing} and is closely related to
the classical Sanov's theorem. 
%In \cite{qSanov,Hayashi:02d}, it is proved that
%the best error exponent when $R$ is general and $S$ is the singleton
%set $S=\{\sigma\}$ is given by
%\be \kl{R}{\sigma}  \defeq \min_{\rho\in R} \kl{\rho}{\sigma}\,.\ee

%Often we will be interested in the case
%where $V$ is a composite system, e.g., a bipartite space $A\otimes B$.
One case of particular interest to quantum information is when $\rho \in \cD(A\ot B)$ and $S$
is the set of separable states on $A \ot B$, i.e.,
$S = \Sep(A:B).$
The quantity $\kl{\rho}{\Sep}  \defeq \kl{\rho}{\Sep(A:B)}$ is known as the {\em relative entropy of
  entanglement}~\cite{Vedral97} and has been widely studied as an
entanglement measure (see, e.g., Table I in \cite{BCY11}); note that it
is usually written as $E_R(\rho)$.

One challenge in working with the relative entropy of entanglement is
that $\kl{\rho^{\ot n}}{\Sep}$ will not in general be equal to $n \cdot
\kl{\rho}{\Sep}$, reflecting the fact that $\Sep(A^{\ot n} : B^{\ot n})$ is larger
than the convex hull of $\{\sigma_1\ot \cdots \ot \sigma_n : \sigma_1,\ldots,\sigma_n
\in \Sep(A:B)\}$. Intuitively, $\Sep(A^{\ot n}:B^{\ot n})$ can be thought of as
the set of states on the $2n$ systems $A_1\ldots A_nB_1\ldots B_n$
which are separable across the $A_1\ldots A_n : B_1\ldots B_n$ cut,
but may be entangled arbitrarily among the $A$ systems and among the
$B$ systems.  This is an example of the quantum-information phenomenon
known as the {\em additivity} problem (see, e.g., \cite{undecidable, Smith10}).

\begin{dfn}\label{def:RS-fam}
Let $\famR = (R^1,R^2,\ldots)$, $\famS=(S^1,S^2,\ldots)$, with $R^n,S^n
\subseteq \cD(V^{\ot n})$.  Then the {\em asymptotic relative entropy} of
$\famR$ with respect to $\famS$ is
\be \kl{\famR}{\famS} \defeq \lim_{n\ra \infty}
\inf_{\substack{\rho\in R^n\\ \sigma\in S^n}}
\frac{\kl{\rho}{\sigma}}{n} .
\label{eq:DRS}\ee
We further define
\ba
\alpha_n(\cM) &\defeq \sup_{\rho \in R^n} \tr \left( (I-\cM)\rho \right) \label{eq:alpha-n}\\
\beta_n(\cM) &\defeq \sup_{\sigma\in S^n} \tr \left( \cM\sigma \right) \label{eq:beta-n}\\
\beta_n^\eps &\defeq \inf \{\beta_n(\cM) : \alpha_n(\cM)\leq\eps\} \label{eq:beta-n-eps}\\
E(\famR, \famS) &\defeq \lim_{\eps \ra 0} \lim_{n\ra\infty} 
\frac{-\log\beta_n^\eps}{n} \label{eq:ERS}
\ea
\end{dfn}
In Eqs.~\eq{alpha-n} and \eq{beta-n}, we have $\cM\in E(V)$ and in
\eq{beta-n-eps} there is an implicit dependence on $R^n,S^n$.
%Note that our convention of using the $\liminf$ is not the only one possible, and other
%sources use $\lim$ or $\limsup$.  
Note that the limits of Eq. \eq{DRS} (resp. Eq. \eq{ERS}) may
not exist, in which case we leave $D(\famR\dmid\famS)$ (resp. $E(\famR,\famS)$)
undefined.  
See \cite{BowenD06} for a discussion of
replacing the $\lim$ with $\liminf$ or $\limsup$.
%In \eq{ERS}, the limit of $\eps\ra 0$ is always defined because the quantity inside the
%limit is a monotonically decreasing function of $\eps$.

An important special case of Eq. \eq{DRS} is the {\em regularized
relative entropy of entanglement}~\cite{VP98}, which is defined to be
$\lim_{n\ra \infty}\frac{1}{n} D(\rho^{\ot n} \dmid \Sep)$, and is normally denoted
$E_R^\infty(\rho)$.  In our notation this quantity is given by
\be \kl{\famrho}{\famSep}.\ee
%where $\sfSep$ is shorthand for the family $(\Sep(A:B), \Sep(A^{\ot 2}: B^{\ot 2}),\ldots)$.

An important result about composite quantum hypothesis testing is that error exponent
$\min_{\rho\in R^1}D(\rho\|\sigma)$ can be achieved by a test that depends only on $R_1$
and $\sigma$~\cite{qSanov,Hayashi:02d}.  In terms of \defref{RS-fam}, this can be expressed as
\be \kl{\famR}{\famS} = E(\famR , \famS), \label{eq:q-stein-generic}\ee
whenever $\famR,\famS$ are of the form $R^n = \{\rho^{\ot n} : \rho \in
R^1\}$ and $S^n = \{\sigma^{\ot n}\}$, for some set $R^1$ and some
state $\sigma$.
We call results of the form \eq{q-stein-generic} ``quantum Stein's
Lemmas,'' because, like the classical Chernoff-Stein Lemma, they give
an equality between a relative entropy and an error exponent for
hypothesis testing.

A quantum Stein's Lemma has also been proven in the case when
$\famR=\famrho$ for a fixed state $\rho$ and $\famS$ is a family of sets.  In this
case, \eq{q-stein-generic} is proved in~\cite{BP10} in the case where $\famS$ is a
{\em self-consistent} family of states, defined as follows.
\begin{dfn}[\cite{BP10}]\label{def:self-consistent}
$\famS=(S^1, S^2,\ldots)$ is a self-consistent family of states if \benum
\item Each $S^n$ is convex and closed.
\item There exists a full-rank state $\sigma$ such that each $S^n$
  contains $\sigma^{\ot n}$.
\item For each $\sigma\in S^n$, $\tr_n \sigma \in S^{n-1}$.
\item If $\sigma_n\in S^n, \sigma_m \in S^m$ then $\sigma_n \ot
  \sigma_m \in S^{n+m}$.
\item $S^n$ is closed under permutation.
  \eenum
\end{dfn}
Some important cases of self-consistent families of states are
$\famSep$ (defined in \secref{q-stein-intro}),
$\mathsf{PPT}$ (defined in \appref{background}, although it will not
be used in this paper) and $\famsigma$ for any full-rank state $\sigma$.

\subsubsection{Hypothesis testing with restricted measurements}

We now introduce the problem of quantum hypothesis testing with
restricted measurements.  The full set of [two-outcome]
measurements on $V^{\ot n}$ (i.e.~$E_2(V^{\ot n})$) consists of all $\{\cM, I-\cM\}$ where $0\leq \cM
\leq I$.  However, it is often useful to consider smaller classes of
measurements, such as those that two parties can perform with local
operations and classical communication (LOCC). When considering
restricted classes of measurements, our objective might be to minimize
the probability of error (subject to the usual tradeoff between type I
and type II errors), or it might be to maximize the classical relative
entropy of the output distributions.  In the former case we will use
measurements in $E_2(V)$ and in the latter we will use measurements in $E_{\bbN}(V)$.

\begin{dfn}\label{def:RSM-fam}
Let $\famR = (R^1,R^2,\ldots)$, $\famS=(S^1,S^2,\ldots)$, with $R^n,S^n
\subseteq \cD(V^{\ot n})$, and $\famM = (M^1, M^2, \ldots)$, with
$M^n\subseteq E_{\bbN}(V^{\ot n})$. Then the {\em asymptotic relative entropy} of
$\famR$ with respect to $\famS$ under measurements $\famM$ is
\begin{subequations}\ba D_{\famM}(\famR \dmid \famS)& \defeq \lim_{n\ra \infty}
D_{M^n}(R^n\dmid S^n) \\
D_{M^n}(R^n\dmid S^n) & \defeq 
\sup_{\cM \in M^n} \inf_{\substack{\rho\in R^n\\ \sigma\in S^n}}
\frac{D(\cM\left(\rho \right) \dmid \cM\left( \sigma \right))}{n} .
\ea \label{eq:DRSM-family}\end{subequations}
For $\cM\in E(V^{\ot n})$, we further define
\ba
\alpha_n(\cM) &\defeq \sup_{\rho \in R^n} \tr \left( (I-\cM)\rho \right) \\
\beta_n(\cM) &\defeq \sup_{\sigma\in S^n} \tr \left( \cM\sigma \right) .
\ea
Now we restrict $M^n$ to two-outcome measurements and use $\cM$ as a
shorthand for $\{\cM,I-\cM\}$ to define
\ba 
\beta_n^\eps(\famM) &\defeq \inf_{\cM \in M^n \cap E_2(V^{\ot n})} \{\beta_n(\cM) : \alpha(\cM)\leq\eps\}
%\beta_n^\eps &= \inf \{\beta_n(\cM) : \alpha(\cM)\leq\eps, \cM \in M^n\}
\\
E_{\famM}(\famR, \famS) &\defeq \lim_{\eps \ra 0} \lim_{n\ra\infty}
\frac{-\log\beta_n^\eps}{n} \label{eq:ERSM-family}
\ea
As before, the quantities \eqref{eq:DRSM-family} and \eqref{eq:ERSM-family}
are left undefined when the corresponding limit does not exist.
\end{dfn}

Following our notation for families of states, we use boldface
(e.g. $\famM$) to denote families of measurements.  In particular, we
define $\SEP(A:B)$ to denote separable measurements on $A:B$ (i.e. $M$
where every POVM element has the form $\sum_i X_i \ot Y_i$ with $X_i, Y_i\geq 0$) and denote the
corresponding family by
$$\famSEP(A:B) = (\SEP(A:B), \SEP(A^{\ot 2}: B^{\ot 2}), \ldots).$$
Again we will often write $\SEP$ or $\famSEP$ where the systems $A,B$
are clear from context.  Note that $\Sep(A:B)$ and $\SEP(A:B)$ both
refer to sets of matrices that can be written as $\sum_i X_i \ot Y_i$
with $X_i,Y_i\geq 0$; the difference is that $\Sep$ refers to density
matrices (i.e. matrices with trace one) and $\SEP$ to measurements
made up from POVM elements (i.e. matrices with operator norm $\leq 1$).

Another important class of measurements is $\ALL^n$, which is simply the
set of all valid quantum measurements on $n$ systems: i.e. $\ALL^n =
E_{\bbN}(V^{\ot n})$.  
The corresponding family is denoted $\famALL$.
Some useful structural facts about $D_{\ALL^n}$ 
are proved in \cite{BertaFT17}.

One further definition we will need (following \cite{Piani09}, but
with different notation) is the idea of a {\em compatible pair}.
\begin{dfn}
If $\famM$ is a collection of measurements and $\famS$
is a collection of states, we say that
$(\famM,\famS)$ are a compatible pair if (a) $\famS$ is closed under permutations of the
systems and under convex combinations, and (b) applying a measurement in $\famM$ to a state in $\famS$ and
conditioning on any outcome leaves a residual state that is still in
$\famS$.   More concretely for positive integers $n,k$, for $\rho_{n+k}\in S^{n+k}$,
for $\cM^k = (\cM_j^k)_{j=1,2,\ldots} \in M^k$, and for $j$ a positive integer, define
$$\tilde\omega_n  = \tr_{n+1,\ldots,n+k}[\rho_{n+k}(I_n \ot \cM_k^j)],$$
and (assuming that $\tr\tilde\omega_n\neq 0$) we define
$$\omega_n  = \frac{\tilde\omega_n}{\tr \tilde\omega_n}.$$
(Here the permutation symmetry of $\famS$ means that we can assume for convenience that
$\cM^k$ acts on the last $k$ systems.)
If $(\famM,\famS)$ is a compatible pair then for any choice of $n,k,j,\rho_{n+k},M_k$,
either $\tr\tilde\omega_n=0$ or $\omega_n \in S^n$.
\end{dfn}

The main example of compatible pair which motivates our work is $(\famSEP, \famSep)$.  We
could also consider $(\LOCC,\famSep)$, or $(\famM,\famSep)$ where $\famM$ is any other
subset of $\famSEP$.   Compatible pairs also arise from resource theories, in which there
is typically a family of free quantum operations and free quantum states, with the
property that the free operations preserve the set of free quantum states. In some cases,
these can be defined by starting with the set of operations (e.g. LOCC operations which
yield the set of separable states) or the set of states (e.g. thermal states of some fixed
Hamiltonians).  We will be interested in a slightly different setting in which quantum
operations are replaced by measurements.  Besides $(\famM,\famSep)$ with
$\famM\subset\famSep$ other examples of compatible pairs are:
\bit
\item {\em Symmetry constraints.} For each $n$, fix a group $G_n$ of
  unitaries acting on $V^{\ot n}$.  These should be compatible in the sense that $G_n
  \ot I \subseteq G_{n+1}$ and $\pi(G_n)=G_n$ for any permutation $\pi$ of the
  $n$ systems.  If $S^n$ is the set of all states that commute with $G_n$ and
  $M^n$ is any subset of the measurements that commute with $G_n$, then 
  $\famS=(S^n)_{n\geq 1}$ and $\famM=(M^n)_{n\geq 1}$ are compatible pairs. This has
  been studied in the context of the resource theory of
  asymmetry~\cite{GS08}.
\item In quantum optics we can take $\famS$ to be the convex hull of Gaussian
  quantum states and 
  $\famM$ the measurements that can be implemented with Gaussian quantum
  operations~\cite{LRWNWA18}.
\item Let $\famS$ be the set of stabilizer states and $\famM$ the set
  of Pauli measurements.  The famous Gottesman-Knill theorem~\cite{Got98} includes the
  fact that $\famS$ is closed under measurements from $\famM$.
\eit
For each of these compatible pairs, if we consider $\famS$ to be set
of free states then the relative entropy $D(\famrho \dmid
\famS)$ can be viewed as a cost of the state $\rho$, with a meaning made more precise in~\cite{HOH02, BG15}.

We will need some more mild regularity conditions on the classes of measurements we
consider.  
\begin{dfn}\label{def:nice-meas}
$\famM = (M^1, M^2, \ldots)$ is a {\em self-consistent} family of measurements if 
\bit
\item For any $k,l$ and any $\cM^k\in M^k, \cM^l\in M^l$, we have $\cM^k\ot \cM^l\in M^{k+l}$ and
  $\cM^k \ot I_l \in M^{k+l}$.
\item $M^n$ is closed under permutations of the $n$ systems.
\item $M^n$ is closed under finite labelled mixtures.  In other words, if
  $\{\cM^{(i)}\}_i$ are a collection of measurements in $M^n$ where $\cM^{(i)}$ has POVM elements
  $\{\cM^{(i)}_j\}_j$ and $\{p_i\}_i$ is a probability distribution then the measurement
  with POVM elements $\{p_i \cM^{(i)}_j\}_{i,j}$ is in $M^n$.  
\eit
\end{dfn}

This last condition on measurements needs a little more explanation.   First, the
measurement outcomes are labelled by pairs of integers, so we need to relax our definition
of $E_{\bbN}(V)$ and allow measurements indexed by any finite set.  Second, observe that
the property of closure under finite labelled mixtures
is implied by the following natural two conditions: (1) that
$M^n$ is convex, and (2) that $M^n$ is closed under relabeling of outcomes, i.e.~if
$(\cM_1,\cM_2,\ldots)\in M^n$ and $\pi:\bbN\mapsto\bbN$ is an injective map then
$(\cM_{\pi(1)}, \cM_{\pi(2)},\ldots)\in M^n$.  These in turn (along with the other
self-consistency properties) are satisfied by all the examples
of families of measurements mentioned in this paper.

Our main results (in Sections \ref{sec:q-stein} and \ref{sec:q-chernoff}) involve
compatible pairs with self-consistent families of measurements, and we also discuss previously known results about compatible pairs in
\secref{superadditivity}.

\subsubsection{Relations between distinguishability measures}
Finally, we state some known and new results that relate the different
versions of $D,E, D_{\famM}, E_{\famM}$.
The following statement is a consequence of the minimax theorem.
\def\LemMinimax#1{Let $R^{#1}$, $S^{#1}\subseteq D(V)$ be closed and
  convex, while $M\subset E_{\bbN}(V)$ is closed under finite labelled mixtures (as
  defined in \defref{nice-meas}).   
 Then
\ba
\sup_{\cM \in M^{#1}} & \min_{\substack{\rho\in R^{#1}\\ \sigma\in S^{#1}}}
D(\cM\left(\rho \right) \dmid \cM\left( \sigma \right))
\twocol{\nn \\ & } = \min_{\substack{\rho\in R^{#1}\\ \sigma\in S^{#1}}}  \sup_{\cM \in M^{#1}}
D(\cM\left(\rho \right) \dmid \cM\left( \sigma \right))
\label{eq:D-minimax}
\ea
}

\begin{lem}\label{lem:D-minimax}
\LemMinimax{}
\end{lem}

Note that the LHS is trivially $\leq$ the RHS, and that the RHS is the form of
restricted-measurement distinguishability introduced by Piani~\cite{Piani09}.

Our Lemma will rely on a minimax theorem that is similar to the minimax
theorems of Kneser, Fan and Sion from the 1950s~\cite{Sion58} but which needs to handle
the possibility that the relative entropy can be infinite.

\begin{lemma}[Thm 5.2 of \cite{FarkasR06}]\label{lem:minimax}
Let $X$ be a compact and convex subset of a Hausdorff topological
vector space and let $Y$ be a convex subset of a linear space.  Let $f:X\times Y \rightarrow
\bbR\cup \{+\infty\}$ be lower semi-continuous on $X$ for fixed $y\in Y$, convex in $x$
and concave in $y$.  Then
\be \sup_{y\in Y}\inf_{x\in X} f(x,y) = \inf_{x\in X}\sup_{y\in Y}f(x,y).\ee
\end{lemma}

% This matches what's in the reference except we used X,Y instead of
% A,B.

\begin{proof}[Proof of \lemref{D-minimax}]
We will take our set $X$ to be $R\times S$ with an element $x$ representing a pair
of density matrices $(\rho,\sigma)$.  Let $\cP(M)$ denote the set of probability
distributions over $M$ with countable support and define $Y=\cP(M)$.  We can now define
\be f((\rho,\sigma),\mu) := 
\E_{\cM \sim \mu}  D(\cM \left(\rho \right) \dmid \cM\left( \sigma \right)).\ee
Clearly $f$ is affine, and hence concave, in $\mu$.  For fixed $\cM$ (and thus fixed $\mu$),
the relative entropy is known to be convex and lower semicontinuous~\cite{Posner75,Donald86}.  Thus
we meet the conditions of \lemref{minimax}. Note also that the lower semicontinuity of $f$
and the compactness of $R\times S$ guarantees that the $\min$ is
achieved.  \lemref{minimax} then implies that
\be 
\min_{\substack{\rho\in R\\ \sigma\in S}}\sup_{\mu \in\cP(M)}
f((\rho,\sigma),\mu)
\leq
\sup_{\mu \in\cP(M)} \min_{\substack{\rho\in R\\ \sigma\in S}}
f((\rho,\sigma),\mu)\label{eq:applied-minimax}\ee
(In fact it establishes an equality but we write $\leq$ to emphasize
the direction that we are trying to prove.)

Eq.~\eq{applied-minimax} is close to what we want but has $\cP(M)$ in place of $M$.
Since $\cP(M)$ includes distributions which assign probability 1 to
a particular measurement, we have
\ba 
\min_{\substack{\rho\in R\\ \sigma\in S}}\sup_{\cM\in M}
& D(\cM\left(\rho \right) \dmid \cM\left( \sigma \right))
\twocol{\nn \\ & }\leq
\min_{\substack{\rho\in R\\ \sigma\in S}}\sup_{\mu \in\cP(M)}
f((\rho,\sigma),\mu)
\label{eq:M-to-PM}.\ea

Upper bounding the $\sup$ over $\cP(M)$ in terms of a $\sup$ over
$\cM$ is less trivial, and will need to use the fact that $M$ is
closed under countable labeled mixtures.  
  Fix $\rho,\sigma,\mu$ and suppose that
$\mu$ assigns probability $p_i$ to $\cM^{(i)}$ for $i=1,2,\ldots$.  Let
$\{\cM^{(i)}_j\}_{j=1,2,\ldots}$ be the POVM elements of $\cM^{(i)}$.  Then we will define
the measurement $\cM$ with POVM elements $\{p_i \cM^{(i)}_j\}_{i,j}$, and by our
hypothesis, $\cM\in M$.
Then
% \begin{subequations}\ba
\ba
f&((\rho,\sigma),\mu) = \sum_i p_i D(\cM^{(i)}(\rho) \dmid
\cM^{(i)}(\sigma)) \nn \\
& = \sum_{i,j} p_i \tr [\cM^{(i)}_j\rho] 
(\log \tr [\cM^{(i)}_j\rho]  - \log \tr [\cM^{(i)}_j\sigma] )\nn \\
& = \sum_{i,j} \tr [p_i\cM^{(i)}_j\rho] 
(\log \tr [p_i\cM^{(i)}_j\rho]  - \log \tr [p_i\cM^{(i)}_j\sigma] )\nn
\\
& = D(\cM(\rho) \dmid \cM(\sigma))
\ea %\end{subequations}
We can take the minimum over $\rho,\sigma$ to obtain
\be \min_{\substack{\rho\in R\\ \sigma\in S}} f((\rho,\sigma),\mu)
\leq
\min_{\substack{\rho\in R\\ \sigma\in S}} 
D(\cM\left(\rho \right) \dmid \cM\left( \sigma \right)), \ee
where $\cM$ depends on $\mu$.  Next we can take the $\sup$ over $\mu$
to obtain
\be \sup_{\mu \in\cP(M)}\min_{\substack{\rho\in R\\ \sigma\in S}} f((\rho,\sigma),\mu)
\leq
\sup_{\cM \in M}\min_{\substack{\rho\in R\\ \sigma\in S}} 
D(\cM\left(\rho \right) \dmid \cM\left( \sigma \right)).\label{eq:PM-to-M} \ee

Finally combining the inequalities \eq{M-to-PM}, \eq{applied-minimax}
and \eq{PM-to-M} implies the proof of the lemma.

\end{proof}

We remark  that some versions of the minimax theorem (i.e. Thm 4.2 of \cite{Sion58})
require only a weaker form of concavity in which for any $p\in [0,1]$ and any $x\in X, y_1,y_2\in
Y$, there exist $y_0\in Y$ such that $f(x,y_0) \geq p f(x,y_1) + (1-p) f(x,y_2)$.  In
other words, $y_0$ does not have to be $p y_1 + (1-p)y_2$ but could be an arbitrary point
and indeed $Y$ does not even have to be a linear space. This would perfectly fit our
approach of taking labelled mixtures of measurements.  However, since our theorem needs to
handle the possibility that $D(\cdot\dmid\cdot)=\infty$, we cannot directly use Thm 4.2 of
\cite{Sion58}.

\medskip
\noindent {\em Known facts:} The following relations between the quantities have been derived previously.
\ba
\twocol{\intertext{quantum Stein's Lemma~\cite{HP91}:}}
E(\famrho, \famsigma)  & = D(\rho \dmid\sigma)
\onecol{& \text{quantum Stein's Lemma~\cite{HP91}}}  \\
\twocol{\intertext{If $\famS$ satisfies property (4) of \defref{self-consistent}:}}
D( \{\rho \} \dmid S^1 ) & \geq D( \famrho \dmid \famS  )
\onecol{& \text{if $\famS$ satisfies property (4) of \defref{self-consistent}}}
\\
% H_{\famM, S} & \leq \ERMSinf
\twocol{\intertext{From monotonicity of relative entropy:}}
D(\famR \dmid \famS) & \geq D_{\famM}(\famR \dmid \famS)
\onecol{& \text{from monotonicity of relative entropy} }
\label{eq:M-monotonicity}\\
\twocol{\intertext{If $\famS$ is a self-consistent family (\defref{self-consistent})\cite{BP10}:} }
E( \famrho, \famS ) & = D( \famrho \dmid \famS ) \label{eq:BP-equality}
\onecol{& \text{for $\famS$ a self-consistent family (\defref{self-consistent})\cite{BP10}} }
\ea

We can, in fact, relate $D_{\ALL}, D, E$ for any $\rho$ and any closed convex $\famS$ using
\be D_{\ALL}(\famrho \dmid \famS)
\stackrel{\eq{limsup-beta}}\geq
E(\famrho, \famS)
\stackrel{\eq{BP-equality}}{=}
D(\famrho \dmid \famS)
\stackrel{\eq{M-monotonicity}}{\geq}
D_{\ALL}(\famrho \dmid \famS) \ee
The main goal of the second half of this paper is to extend these results as far
as possible to $D_{\famM}$ and $E_{\famM}$.

\subsubsection{Superadditivity}\label{sec:superadditivity}
When we consider families of states and measurements, it is not {\em a
  priori} clear whether the distinguishability per system should
increase or decrease with the number of systems.  We say that a
quantity $f(\rho)$ is {\em subadditive} if $f(\rho_{XY}) \leq
f(\rho_X) + f(\rho_Y)$ (e.g., entropy) and {\em superadditive} if
$f(\rho_{XY}) \geq f(\rho_X) + f(\rho_Y)$ (e.g., most entanglement
measures).  A function $f$ is weakly subadditive
$f(\rho^{\ot n})\leq n f(\rho)$ and is weakly superadditive
if $f(\rho^{\ot n})\geq n f(\rho)$).
If a function is both (weakly) subadditive and superadditive then we
say it is (weakly) {\em additive.}

One of the main results known so far about relative entropy with
restricted measurements is due to Piani~\cite{Piani09}, who used these
measures to prove a superadditivity inequality.
\ba
\twocol{\intertext{For compatible $(\famM, S)$ \cite{Piani09}:}}
D (\rho_{XY} \dmid S^2) & \geq D_{\famM}(\rho_X \dmid S^1) + D(\rho_Y \dmid S^1)
\onecol{& \text{for compatible $(\famM, S)$ \cite{Piani09}}}
\label{eq:Piani-asym} \\
\twocol{\intertext{As a corollary of \eq{Piani-asym} \cite{Piani09}:}}
D(\famrho \dmid \famS) & \geq D_{\famM}(\rho \dmid S^1)
\onecol{& \text{as a corollary of \eq{Piani-asym} \cite{Piani09}}}
  \ea

In fact, Piani's result can easily be improved to show that
$D_{\famM}(\famR\dmid\famS)$ is superadditive whenever $(\famM,\famR)$ and
$(\famM,\famS)$ are compatible pairs, or in fact when $\famR$ satisfies a milder condition.

\def\LemSuperadd{
Let $(\famM, \famS)$ be a compatible pair with $\famM$ a
self-consistent family. Let $\famR$ be a family of states that is closed under partial
trace, i.e.~satisfying $\tr_jR^n \subseteq R^{n-1}$ for each $1\leq j\leq n$.
Then for all $\rho_{XY}\in \cD(V^{\otimes k} \otimes V^{\otimes l})$, if we identify $X$
with $V^{\otimes k}$ and $Y$ with $V^{\ot l}$, we have
\be \label{eq:superaddidivitymarco}
D_{M^{k+l}}( \rho_{XY} \dmid S^{k+l} ) \geq D_{M^k}( \rho_{X} \dmid S^k) 
+ D_{M^l}( \rho_{Y} \dmid S^l ).
\ee
Moreover,
\ba\label{eq:superaddidivitymarco2}
D_{\famM}(\famR \dmid \famS)
& = \lim_{n\ra\infty}\frac 1n D_{M_n}(R^n \dmid S^n)
\twocol{\nn \\ &} 
= \sup_n \frac 1n D_{M_n}(R^n \dmid S^n).
% \geq D_{M^1}(R^1 \dmid S^1),
\ea
}
\begin{lem}\label{lem:superadd}
\LemSuperadd
\end{lem}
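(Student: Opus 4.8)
The plan is to give a measured-relative-entropy version of Piani's chain-rule argument: I build the measurement on $XY$ as a product of near-optimal single-system measurements and control the conditional states it produces using the compatibility hypotheses. For the first inequality, fix near-optimal $\cN,\cP \in M_1$ for $D_{M_1}(\rho_X \dmid S_1)$ and $D_{M_1}(\rho_Y \dmid S_1)$ respectively and set $\cM = \cN \ot \cP$. This lies in $M_2$ because all the classes we consider ($\famSEP$, $\famALL$, and the LOCC hierarchies) contain local product measurements. Applying $\cM$ to a state $\rho_{XY}$ and to any $\sigma \in S_2$ gives joint outcome distributions, and I would invoke the classical chain rule for relative entropy to write $D(\cM(\rho)\dmid\cM(\sigma)) = D(\cN(\rho_X)\dmid\cN(\sigma_X)) + \sum_i p_i\, D(\cP(\rho_Y^{(i)})\dmid\cP(\sigma_Y^{(i)}))$, where $p_i$ is the probability of outcome $i$ of $\cN$ on $\rho_X$ and $\rho_Y^{(i)}, \sigma_Y^{(i)}$ are the conditional states on $Y$ (valid since $\cM$ acts as a product on $X$ and $Y$).

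The two structural inputs are (i) $\sigma_X = \tr_Y \sigma \in S_1$ and (ii) each $\sigma_Y^{(i)} \in S_1$. Both follow from compatibility of $(\famM,\famS)$ together with convexity of $S_1$: conditioning $\sigma$ on the outcome of $\cN \in M_1$ keeps it in $S_1$, and tracing out a system is a convex combination of such conditionings. These memberships let me lower-bound the first term by $\inf_{\mu \in S_1} D(\cN(\rho_X)\dmid\cN(\mu))$ and each summand of the second by $\inf_{\tau \in S_1} D(\cP(\rho_Y^{(i)})\dmid\cP(\tau))$. To eliminate the dependence on $i$ in the second term, I would use that $\omega \mapsto \inf_{\tau \in S_1} D(\cP(\omega)\dmid\cP(\tau))$ is convex (joint convexity of relative entropy, infimum over the convex set $S_1$, composed with the linear map $\cP$); combined with the identity $\sum_i p_i \rho_Y^{(i)} = \rho_Y$ and Jensen, the second term is at least $\inf_{\tau \in S_1} D(\cP(\rho_Y)\dmid\cP(\tau))$. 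Taking $\inf_{\sigma \in S_2}$ and then optimizing $\cN$ and $\cP$ independently (they appear in disjoint terms) yields $D_{M_2}(\rho_{XY}\dmid S_2) \geq D_{M_1}(\rho_X\dmid S_1) + D_{M_1}(\rho_Y\dmid S_1)$.

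For the ``Moreover'' part, I would first observe that the identical argument, splitting $V^{\ot n}$ into its first system and the remaining $n-1$ (now with $\cP \in M_{n-1}$), gives $D_{M_n}(\rho\dmid S_n) \geq D_{M_1}(\rho_1\dmid S_1) + D_{M_{n-1}}(\rho_{2\ldots n}\dmid S_{n-1})$ for every state $\rho$ on $V^{\ot n}$. Iterating produces $D_{M_n}(\rho\dmid S_n) \geq \sum_{j=1}^n D_{M_1}(\rho_j\dmid S_1)$, where $\rho_j$ is the one-system marginal. If $\rho \in R_n$, then compatibility of $(\famM,\famR)$ with convexity of $R_1$ gives $\rho_j \in R_1$, and \lemref{minimax} lets me rewrite $D_{M_1}(R_1\dmid S_1) = \inf_{\rho' \in R_1} D_{M_1}(\rho'\dmid S_1)$, so each summand is at least $D_{M_1}(R_1\dmid S_1)$ and thus $D_{M_n}(\rho\dmid S_n) \geq n\, D_{M_1}(R_1\dmid S_1)$. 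Applying \lemref{minimax} a second time to identify $D_{M_n}(R_n\dmid S_n) = \inf_{\rho \in R_n} D_{M_n}(\rho\dmid S_n)$, dividing by $n$, and letting $n \to \infty$ gives $D_{\famM}(\famR\dmid\famS) \geq D_{M_1}(R_1\dmid S_1)$.

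The main obstacle I anticipate is not analytic but structural bookkeeping: verifying that every conditional and marginal state genuinely lands in $S_1$ or $R_1$ directly from the compatible-pair definition (and that the product measurement really lies in $M_2$, resp.\ $M_n$). Once those membership facts are pinned down, the chain rule, the convexity/Jensen step, and the two invocations of \lemref{minimax} are routine, so I expect the compatibility/closure accounting to be where the care is needed.
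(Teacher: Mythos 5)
Your proposal is correct and follows essentially the same route as the paper's proof, which is a direct adaptation of Piani's chain-rule argument: a product of (near-)optimal single-system measurements, the classical chain rule for the outcome distributions, joint convexity to recombine the conditional terms (the paper merges the second arguments into a single convex-combination state lying in $S_1$, where you instead apply Jensen to the measured relative entropy to the set $S_1$ --- these are equivalent), and compatibility plus convexity to certify that the relevant marginals and conditionals stay in $S_1$ (resp.\ $R_1$). Your explicit iteration and double invocation of \lemref{minimax} for the ``Moreover'' part carefully fills in a step the paper dismisses as ``a direct consequence'' (correctly flagging the inf/sup-ordering issue), at the same level of rigor the paper itself uses when it appeals to \lemref{minimax} in the proof of \thmref{q-stein}.
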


\begin{proof}
The argument is a direct adaptation of the proof of Theorem 1 in \cite{Piani09}.

%Fix some $\eps>0$ and some state $\rho_{XY}$.  Let $\cM^X, \cM^Y \in M^1$ be near-optimal
%measurements for $D_{M^1}(\rho_X \dmid S^1)$ and $D_{M^1}(\rho_Y \dmid S^1)$,
%respectively, meaning that they come within $\eps$ of the suprema.  

 Let $\cM^X\in M^k, \cM^Y \in M^l$ be arbitrary.
Define an orthonormal basis $\ket{1},\ket{2},\ldots$ corresponding to the outcomes
$1,2,\ldots$ of $\cM^X$.
Define $p_i(\rho_X) = \tr(\cM_i^X
\rho_X) = (\cM^X(\rho))_i$ and $\rho_Y^i = \tr_{X}[(\cM_i^X \otimes I_Y) \rho_{XY}] /
p_i(\rho_X)$.  Choose $\sigma_{XY} \in S^{k+l}$ and define $p_i(\sigma_X)$
and $\sigma_Y^i$ analogously.
\twocol{Then in \eq{marcochain} (at the top of the next page) we derive a lower bound for 
  $D((\cM^X \ot \cM^Y)(\rho_{XY}) \dmid (\cM^X \ot \cM^Y)(\sigma_{XY}) )$.
\begin{figure*}[!t]
 \normalsize
\setcounter{MYtempeqncnt}{\value{equation}}
}
\begin{subequations}\label{eq:marcochain}
\ba
D&\left((\cM^X \ot \cM^Y)(\rho_{XY}) \dmid (\cM^X \ot \cM^Y)(\sigma_{XY}) \right) 
\nonumber \\  &= D \left(\sum_{i\geq 1} p_i(\rho_X) \proj i \ot \cM^Y(\rho_Y^i) \dmid
  \sum_{i\geq 1} p_i(\sigma_X) \proj{i} \ot \cM^Y(\sigma_Y^i) \right) \label{eq:marco1} \\
\intertext{\hfill follows from Proposition 1 of  \cite{Piani09}}
&=  D\left( \cM^X(\rho_X) \dmid \cM^X(\sigma_X) \right) + \sum_{i\geq 1} p_i(\rho_X) D \left(\cM^Y(\rho_Y^i) \dmid \cM^Y(\sigma_Y^i) \right) \label{eq:marco2} \\
&\geq D \left( \cM^X(\rho_X) \dmid \cM^X(\sigma_X) \right) + D\left (\sum_{i\geq 1}
  p_i(\rho_X) \cM^Y(\rho_Y^i) \dmid \sum_{i\geq 1} p_i(\rho_X) \cM^Y(\sigma_Y^i)     \right) \label{eq:marco3} \\
\intertext{\hfill from joint convexity of relative entropy}
&= D \left( \cM^X(\rho_X) \dmid \cM^X(\sigma_X) \right) + D\left(
  \cM^Y(\rho_Y) \dmid \cM^Y\left(\sum_{i\geq 1}  p_i(\rho_X) \sigma_Y^i \right)   \right),
\label{eq:marco4}
\\
\intertext{\hfill from linearity of the measurement}& \geq 
\inf_{\tilde\sigma_X\in S^k}
D\left(\cM^X(\rho_X) \dmid \cM^X(\tilde\sigma_X)\right) + 
\inf_{\tilde\sigma_Y\in S^l}
D\left(\cM^Y(\rho_Y) \dmid \cM^Y(\tilde\sigma_Y)\right)
\label{eq:marco5}
\ea
\end{subequations}
\twoone{\setcounter{equation}{\value{MYtempeqncnt}}
In Piani's proof in \cite{Piani09} the
analogues of \eq{marco3}, \eq{marco4} were Lemma 1 and Property 2 of
Proposition 1 respectively.

\hrulefill
\vspace*{4pt}
\end{figure*}}
{where \eq{marco1} follows from Proposition 1 of 
\cite{Piani09}, \eq{marco2} from direct calculation, \eq{marco3} from joint convexity of
relative entropy, and \eq{marco4}  
from linearity of the measurement.  (In Piani's proof in \cite{Piani09} the
analogues of the third and fourth lines were Lemma 1 and Property 2 of
Proposition 1 respectively.)}

We now take the infimum over $\sigma_{XY}$ and then the supremum over $\cM^X,\cM^Y$,
yielding \eqref{eq:superaddidivitymarco}.    If we instead take the infimum over
$\sigma_{XY}$ and $\rho_{XY}$ before taking the supremum over $\cM^X,\cM^Y$, then we find
that $D_{M^{k+l}}(R^{k+l} \dmid S^{k+l}) \geq D_{M^{k}}(R^{k} \dmid S^{k})
 + D_{M^{l}}(R^{l} \dmid S^{l})$.  In other words, if $f(n) = D_{M_n}(R^n \dmid S^n)$ then
$f$ is superadditive (i.e.~$f(k+l)\geq f(k)+f(l)$). This implies
(\ref{eq:superaddidivitymarco2}).

%Since $(\famM, \famS)$ is a compatible pair, we can lower bound the last term of
%(\ref{eq:marcochain}) by $D_{M^1}(\rho_X \dmid S^1) + D_{M^1}(\rho_Y \dmid S^1)-2\eps$, from
%which (\ref{eq:superaddidivitymarco}) follows by taking $\eps\ra 0$. 
\end{proof}

The preceding lemma says that $D_{\famM}(\rho \dmid \famS)$ is a superadditive function of
$\rho$ for
compatible pairs $(\famM, \famS)$.  The compatibility requirement here is essential.  The
pair $(\famALL,\famSep)$ is not compatible, and $D(\cdot \dmid \famSep)$ is known to be
strictly subadditive (i.e.~not superadditive) in some cases~\cite{VW01}.  
%This can due to the phenomenon of entanglement swapping which we review in the appendix. 
This does not directly
yield an example of strict subadditivity for $D_{\famALL}(\cdot\dmid\famSep)$ but can be
modified to do so.  The example in \cite{VW01} is the antisymmetric Werner state $\rho = \frac{I
- \swap}{d(d-1)} \in \cD(\bbC^d \ot \bbC^d)$. In \cite{VW01}, it is proved that 
\bas D(\rho \| \Sep(\bbC^d : \bbC^d)) & = 1
\twoone{\\}{\qquad\text{and}\qquad}
D(\rho\ot\rho \| \Sep(\bbC^{d^2} : \bbC^{d^2})) \twocol{&} = 1 + O(1/d),\eas
showing that $D(\cdot \dmid \famSep)$ can be strictly subadditive.  Observe that if we
measure $\rho$ with the two outcome measurement $\{\frac{I \pm \swap}{2}\}$ and label the
outcomes +/- then we will always obtain the outcome - while for any $\sigma\in\Sep$ we
have $\Pr[-] \leq 1/2$.  Thus $D_{\ALL}(\rho \| \Sep(\bbC^d:\bbC^d)) \geq 1$ (and in fact
equality holds).  On the other
hand, monotonicity of relative entropy implies that
\ba D_{\ALL}(\rho\ot \rho \| \Sep(\bbC^{d^2} : \bbC^{d^2})) 
& \twocol{\nn\\}\leq 
D(\rho\ot\rho \| \Sep(\bbC^{d^2} : \bbC^{d^2})) & = 1 + O(1/d).
\ea
Thus we have an example where $D_{\famALL}(\cdot \| \famSep)$ is strictly subadditive.

On the other hand, $D_{\famM}(\cdot \dmid \famS)$ can be strictly superadditive (i.e., not
subadditive).  Let us consider the simple situation in which $R^n = \{\rho^{\ot n}\}$ and
$S^n = \{\sigma^{\ot n}\}$.  It is a consequence of the quantum Stein's Lemma \eq{q-stein}
(see also \cite{Hayashi:02d}) that
$$D(\rho\dmid\sigma) = \lim_{n\ra \infty} \frac{1}{n} D_{\ALL}(\rho^{\ot
  n} \dmid \sigma^{\ot n}).$$
Thus, any example in which
\be \max_{M\in\ALL} D(M(\rho) \dmid M(\sigma)) < 
D(\rho\dmid\sigma)\label{eq:strict-super}\ee
will yield an example in which $D_{\famM}(\cdot \dmid \famS)$ is strictly superadditive.
In fact, Proposition 5 of \cite{BertaFT17} (building upon Lemma 1 of
\cite{Petz-monotonicity}) states that \eq{strict-super} holds whenever 
$D(\rho\dmid\sigma)$ is finite and $\rho\sigma\neq \sigma\rho$.  Thus superadditivity is a
generic property of $D_{\famM}(\cdot \dmid \cdot)$.

\subsection{A quantum Stein's Lemma for restricted measurements}
\label{sec:q-stein}

%A consequence of Theorem \ref{thm:adaptive} to quantum hypothesis testing is the following extension of quantum Stein's Lemma:

\begin{thm}[Quantum Stein's Lemma for restricted measurements] \label{thm:q-stein}
For any compatible pairs $(\famM, \famR)$ and $(\famM, \famS)$ with $\famM$ a
self-consistent family and $\famR,\famS$ closed,
\be
D_{\famM}(\famR \dmid \famS) = E_{\famM}(\famR ,\famS)\,.
\ee
%\be H_{\famM,S} (\rho) = \ERMS(\rho) \ee
\end{thm}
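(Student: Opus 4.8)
The plan is to prove the two inequalities separately: the converse $E_{\famM}(\famR,\famS)\le D_{\famM}(\famR\dmid\famS)$, which is routine and uses no compatibility, and the achievability $E_{\famM}(\famR,\famS)\ge D_{\famM}(\famR\dmid\famS)$, which is the crux and is where \thmref{adaptive} is invoked. Throughout I assume, as holds in all the motivating examples (e.g.\ $\famR=\famrho$ and $\famS=\famSep$), that each $R_n$ and $S_n$ is closed and convex, so that their images under a fixed measurement are closed convex sets of distributions to which \thmref{adaptive} applies.

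For the converse I would fix $N$, an acceptance operator $\cM\in M_N$ with $\alpha_N(\cM)<\eps$, and arbitrary $\rho\in R_N$, $\sigma\in S_N$. Viewing $\{\cM,I-\cM\}$ as a two-outcome measurement, $\kl{\cM(\rho)}{\cM(\sigma)}$ equals the binary relative entropy $d\!\left(1-a\dmid b\right)$ with $a=\tr((I-\cM)\rho)<\eps$ and $b=\tr(\cM\sigma)\le\beta_N(\cM)$. The elementary bound $d(1-a\dmid b)\ge (1-\eps)\log(1/b)-O(1)\ge (1-\eps)\log(1/\beta_N(\cM))-O(1)$ holds uniformly in $\rho,\sigma$, so taking the infimum over $\rho,\sigma$ gives $(1-\eps)\log(1/\beta_N(\cM))\le \inf_{\rho,\sigma}\kl{\cM(\rho)}{\cM(\sigma)}+O(1)\le \sup_{\cM'\in M_N}\inf_{\rho,\sigma}\kl{\cM'(\rho)}{\cM'(\sigma)}+O(1)$. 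Taking the supremum over admissible $\cM$ (which produces $-\log\beta_N^\eps$), dividing by $N$, and letting $N\to\infty$ and then $\eps\to0$ yields $E_{\famM}(\famR,\famS)\le D_{\famM}(\famR\dmid\famS)$.

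For achievability I would build an explicit test from a single good block measurement together with the Neyman--Pearson region of \thmref{adaptive}. Fix $n$ and a measurement $\cM\in M_n$ nearly attaining $\sup_{\cM\in M_n}\inf_{\rho,\sigma}\kl{\cM(\rho)}{\cM(\sigma)}$, and set $P=\{\cM(\rho):\rho\in R_n\}$, $Q=\{\cM(\sigma):\sigma\in S_n\}$, closed convex sets of distributions over the finite outcome alphabet of $\cM$. To test on $V^{\ot nk}$, I partition into $k$ blocks of $n$ systems and measure each block with $\cM$, obtaining outcomes $Y_1,\dots,Y_k$. The key point is that compatibility of $(\famM,\famR)$ forces, for every $\rho\in R_{nk}$, the law of $(Y_1,\dots,Y_k)$ to be an adaptive $P$-strategy: after measuring the first $j-1$ blocks and conditioning on their outcomes, the residual reduced state on block $j$ still lies in $R_n$, so the conditional law of $Y_j$ lies in $P$; symmetrically, compatibility of $(\famM,\famS)$ makes the outcomes of any $\sigma\in S_{nk}$ an adaptive $Q$-strategy. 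Now \thmref{adaptive} supplies a classical acceptance region $A_k$ with $\adv(P,Q)=\min_{\rho\in R_n,\sigma\in S_n}\kl{\cM(\rho)}{\cM(\sigma)}$, whose type-1 error vanishes uniformly over adaptive $P$-strategies (the uniform version of \eqref{eq:step1}) and whose type-2 error is at most $e^{-k(\adv(P,Q)-o(1))}$ uniformly over adaptive $Q$-strategies (the uniform version of \eqref{eq:step2}). Composing the block measurements with the decision ``accept iff $(Y_1,\dots,Y_k)\in A_k$'' gives a test on $V^{\ot nk}$ whose $\alpha$ tends to $0$ and whose $\beta$ satisfies $-\tfrac{1}{nk}\log\beta\ge \tfrac1n\min_{\rho,\sigma}\kl{\cM(\rho)}{\cM(\sigma)}-o(1)$. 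Letting $k\to\infty$, optimizing over $\cM\in M_n$, and then sending $n\to\infty$ gives $E_{\famM}(\famR,\famS)\ge D_{\famM}(\famR\dmid\famS)$.

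The step I expect to require the most care is checking that this composite test genuinely belongs to $M_{nk}$: it applies $\cM\in M_n$ in parallel on the $k$ blocks and then post-processes the classical outcomes, so I need $\famM$ to be closed under parallel (block) composition and classical post-processing. This holds for every family of interest---$\famSEP$, $\ONELOCC$, $\famALL$---and I would either record it as a standing structural hypothesis on $\famM$ or verify it for the relevant classes. Two minor points also need attention. First, the convergence and decay rates borrowed from \thmref{adaptive} must be uniform over all adaptive strategies; this is immediate because the martingale (Chebyshev) and supermartingale (Markov) estimates in its proof depend only on $P,Q$ through the minimizers $p^*,q^*$, not on the particular strategy. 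Second, to handle values of $N$ that are not multiples of $n$, I would use $\lfloor N/n\rfloor$ blocks and accept trivially on the leftover $N-n\lfloor N/n\rfloor<n$ systems, which perturbs the exponent by $O(n/N)\to 0$ and hence does not affect the limit.
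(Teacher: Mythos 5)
Your proposal is correct and follows essentially the same route as the paper: the achievability direction measures a near-optimal $\cM \in M_n$ blockwise, uses compatibility to turn the outcome sequences of states in $R_{nk}$ and $S_{nk}$ into adaptive $P$- and $Q$-strategies with $P = \cM(R_n)$, $Q = \cM(S_n)$, and then invokes \thmref{adaptive}, while the converse is the standard data-processing/monotonicity bound. The only differences are points of extra care rather than a different argument---you make explicit the structural hypotheses the paper leaves implicit (closure of $\famM$ under parallel block composition with classical post-processing, and closedness/convexity of $R_n, S_n$ so that $P$ and $Q$ are closed convex sets), and you handle block lengths not dividing $N$ by padding.
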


\begin{proof}

For any positive integer $k$, suppose  $0 \leq E_k<  \frac 1 k D_{M^k}(R^k \dmid
S^k)$. (If $D_{M^k}(R^k \dmid S^k) =\infty$ then this means that $E_k$ is an arbitrary
nonnegative number.)  The supremum over measurements in the definition of $D_{M^k}(R^k
\dmid S^k)$ means that there exists $\cM^k\in M^k$ such that
$$ \frac 1k D(\cM^k(R^k) \dmid \cM^k(S^k))> E_k.$$
Define  $P \defeq \cM^k(R^k)$ and $Q \defeq  \cM^k(S^k)$. Then
\be \frac 1k D(p\dmid q) >E_k \qquad\forall p\in P, q\in Q.
\label{eq:M-def}\ee
Given a state $\rho\in\cD(V^{\ot nk})$, we apply $\cM^k$ to each block of
$k$ systems, obtaining outcomes $x_1,\ldots,x_n$.
Then since $(\famM, \famR)$ and $(\famM, \famS)$
are compatible pairs, the distribution of each $x_i$, conditioned on
any possible value of $x_1,\ldots,x_{i-1}$, is an element of
$P$  (if $\rho\in R^{nk}$) or $Q$ (if $\rho\in S^{nk}$).  Thus,
according to \thmref{adaptive}, there is a sequence of acceptance regions that
achieves the rate $E_k$.  Thus for any $\eps\in (0,1)$, 
\be
\liminf_{n\ra\infty} -\frac{1}{nk} \log \beta_{nk}^\eps
\geq E_k
\label{eq:k-copy-achievable}\ee
Given a state in $R^{nk+l}$ or $S^{nk+l}$ for $l<k$ we can discard $l$ systems and obtain
a state in  in $R^{nk}$ or $S^{nk}$, using the fact that $\famR,\famS$ are closed under
partial trace.  Thus we can drop the $k$-dependence from the LHS of \eq{k-copy-achievable}
to obtain
\be
\liminf_{n\ra\infty} -\frac{1}{n} \log \beta_{n}^\eps
\geq E_k.
\ee
Since this holds for any $k$, we can take the $\limsup$ over $k$ to find
\be
\liminf_{n\ra\infty} -\frac{1}{n} \log \beta_{n}^\eps
\geq 
\limsup_{k\ra\infty} \frac 1k D_{M^k}(R^k \dmid S^k) = D_{\famM}(\famR\dmid \famS).
\label{eq:liminf-beta}\ee
This last equality is due to \lemref{superadd}.  
%\be E_{\famM}(\famR, \famS) \geq D_{\famM}(\famR\dmid\famS).\ee

The reverse inequality can be obtained by the following standard argument which we adapt
from \cite{HP91}.   See also footnote 11 of \cite{BCY11} where roughly the same
result was stated and attributed to \cite{HP91,ON00}.  We include a proof here
for completeness and because previous work did not technically show the same results.

For a positive integer $n$ and $\eps>0$, let $\cM := (\cM,I-\cM) \in M_n$ be a measurement such that
$\tr [\cM\rho] > 1-\eps$.
% and such that $\supp M(\rho) \subseteq \supp M(\sigma)$ for all
Then  for any $\rho\in R_n,\sigma\in  S_n$,
\ba
\sup_{\cM'\in M_n} & D(\cM'(\rho) \dmid \cM'(\sigma))
\twoone{\nn \\ \geq & }{\geq}  D(\cM(\rho) \| \cM(\sigma)) \nn\\
= & \tr[\cM\rho]\log \tr [\cM\rho] - \tr[\cM\rho]\log \tr[\cM\sigma]\nn \\
& + \tr[(I-\cM)\rho]\log \tr [(I-\cM)\rho]
\twocol{\nn \\ & }- \tr[(I-\cM)\rho]\log \tr[(I-\cM)\sigma] \nn\\
 \geq & -h_2(\tr[\cM\rho]) -  \tr[\cM\rho]\log \tr[\cM\sigma] \nn\\
 \geq & -1 -  (1-\eps)\log \tr[\cM\sigma] 
\ea
Here we define $h_2(p) = -\log(p)-\log(1-p)$ and take log to be base 2.
Rearranging yields
\be -\log \tr[\cM\sigma] \leq \frac{1 + \sup_{\cM'\in M_n} D(\cM'(\rho) \dmid
  \cM'(\sigma))}{1-\eps}.\twocol{\nn}\ee
To relate this to $\beta^\eps_n(\famM)$ we take the $\inf$ over
$\rho\in R_n,\sigma\in S_n$ and then the $\sup$ over $\cM\in M^n\cap E_2(V^{\ot n})$
satisfying $\alpha_n(\cM)\leq\eps$.  
This implies that
\be -  \beta^\eps_n(\famM) \leq
\frac{1 + \inf\twocol{\limits}_{\substack{\rho\in R_n\\ \sigma\in S_n}}\sup\twocol{\limits}_{\cM'\in M_n} D(\cM'(\rho) \dmid
  \cM'(\sigma))}{1-\eps}.
\ee
We can now use \lemref{D-minimax} to exchange the $\inf$ and $\sup$.  Finally we can
divide by $n$ and take the $\limsup$ in $n$ to obtain
\be \limsup_{n\ra\infty} -\frac{1}{n}\beta_n^\eps \leq
\frac{1}{1-\eps}  \limsup_{n\ra\infty} \frac 1n D_{M_n}(R_n\dmid S_n)
\label{eq:limsup-beta}\ee

Combining \eq{liminf-beta} and \eq{limsup-beta} and taking $\eps\ra 0$ we finally
 establish that 
\be E_{\famM}(\famR,\famS) = D_{\famM}(\famR\dmid \famS)
\label{eq:DE-monotonicity}.\ee

% For any $\delta>0$ and any sufficiently small $\eps>0$, there exists  $n$ such that
% (according to  \eq{ERSM-family}), 
% \be \beta_n^\eps \geq E_{\famM}(\famR, \famS) - \delta.\label{eq:beta-n-eps-LB}\ee
% This means that
% there exists $\{\cM^n,I-\cM^n\}$ such that  $\alpha_n(\cM^n) \leq \eps$
% and $\beta_n(\cM^n) \leq \exp(-n\beta_n^\eps+1)$, i.e.  for all $\rho\in
% R^n,\sigma\in S^n$ we have
% \be \tr(\cM^n\rho) \geq 1-\eps
% \qquad\text{and}\qquad
% \tr(\cM^n\sigma) \leq \exp(-n\beta_n^\eps+1),\ee
% with the $+1$ in the last expression reflecting the possibility that the infimum in
% $\beta_n^\eps$  is not achieved.
% Thus we can bound
% \be
% \frac 1n D_{\cM^n}(\rho \dmid \sigma) 
% \geq
% \frac 1n \tr [\cM^n\rho](\log \tr [\cM^n\rho] - \log \tr [\cM^n\sigma])
% \geq \frac 1n (1-\eps)(\log(1-\eps) + n \beta_n^\eps - 1).
% \label{eq:DMn-LB}
% \ee
% This bound holds uniformly for $\rho\in R^n,\sigma\in S^n$ so also holds if we take the
% $\inf$ over $\rho\in R^n,\sigma\in S^n$.  Recall that $D_{\famM}(\famR\dmid\famS) :=
% \lim_{n\ra \infty} \frac 1n D_{M^n}(R^n \dmid S^n)$, where
% \be D_{M^n}(R^n \dmid S^n) = \sup_{\cM\in M^n} 
% \inf_{\substack{\rho\in R^n\\ \sigma\in S^n}}

% D(\cM(\rho) \dmid \cM(\sigma)).\ee
% From \eq{DMn-LB} and \eq{beta-n-eps-LB} we have
% \be
% D_{M^n}(R^n \dmid S^n) \geq
% (1-\eps)(E_{\famM}(\famR, \famS) - \delta) - \frac{1-\log(1-\eps)}{n}
% \ee
% Since we can take $n\ra \infty$ as $\eps,\delta\ra 0$ (at this stage the order of limits
% no longer matters) we conclude that
% \be D_{\famM}(\famR\dmid\famS) \geq
% E_{\famM}(\famR,\famS).\label{eq:DE-monotonicity}\ee
\end{proof}

%In fact, we can prove the stronger (why stronger?  because of
%\eq{DMS-superadd2}) inequality
%\be H_{\famM,S} (\rho) \geq \ERMSinf(\rho) \ee
%The proof is essentially the same, but we to allow block measurements
%on the RHS.

%The reverse inequality we will later put into ``known facts'', and is
%$H_{\famM, S} \leq \ERMSinf$.   Thus we have EQUALITY, aka a new
%version of Stein's Lemma.

This is analogous to the result in \cite{BP10}, which established $E(\famrho, \famS ) =
D_{\famALL}(\famrho \dmid \famS )$ for self-consistent sets of states $\famS$, but incomparable
because in general $(\famALL, \famS)$ will not be a compatible pair.

While this shows that the optimal hypothesis testing rate for this
restricted-measurement setting does indeed reduce to a relative
entropy, it may be difficult to compute $D_{\famM}$ because of the
regularization (i.e. $\lim_{n\ra\infty}$) and optimization over
measurements in \eq{DRSM-family}.  However, in some special cases, it
is known how to carry out this optimization; e.g. \cite{HayashiO17}
computes the relative entropy of a pure entangled state with respect
to the maximally mixed state under various restricted classes of measurements.

\subsection{Stronger Subadditivity of Quantum Entropy } \label{sec:q-SSA}

We now present an application of \thmref{q-stein} to a strengthening of the celebrated strong subadditivity
inequality of Lieb and Ruskai for the quantum entropy \cite{LR73}, which can be written as
\be
I(A:B \mid C)_\rho \geq 0\,
\ee
where
\bas I(A \twocol{&}:B \mid C)_\rho
\twocol{\\} &   \defeq  H(AC) _\rho + H(BC) _\rho - H(ABC) _\rho - H(C) _\rho
\\ &\defeq  H(\rho_{AC}) + H(\rho_{BC}) - H(\rho_{ABC}) - H(\rho_C)\eas
 denotes the conditional
mutual information of a state $\rho_{ABC}$.   In what follows we will
often omit the subscript $\rho$ when the state is understood.  See
\appref{background} for additional discussion.

In \cite{BCY11}, the following lower bound was shown for any
state $\rho_{ABC}$:
\ba
I(A:B\mid C) &\geq D_{\ALL}( \famrho_{ABC} \dmid \famSep(A:BC) )
\twocol{\nn \\ & }-  D_{\ALL}( \famrho_{AC} \dmid \famSep(A:C) )
\ea
Moreover the following inequality was shown
\ba
D_{\ALL}&( \famrho_{ABC} \dmid \famSep(A:BC) ) - D_{\ALL}(
\famrho_{AC} \dmid \famSep(A:C) )
\twocol{\nn \\ & }\geq  E_{\text{1-LOCC}}( \famrho ,\famSep(A:B)),
 \label{eq:monogamy}\ea
with $\text{1-LOCC}$ the class of all measurements that can be
implemented by quantum local operations and classical communication
from Bob to Alice (see \appref{background} for the precise
definition). This implies that the conditional mutual information is
lower bounded by  $E_{\mathbf{\text{1-LOCC}}}(\famrho,\famSep(A:B))$.
(Ref.~\cite{BCY11} actually stated a weaker result in terms of the 1-LOCC
(trace) distance, but their proof essentially contains \eq{monogamy} as an
intermediate step.   In reading \cite{BCY11,LW12} beware that they use the symbols $D$ and
$E$ with meanings reversed from our conventions.)

In \cite{LW12} the following apparent strengthening of \eq{monogamy} was obtained:
\ba
D_{\ALL}&(\famrho_{ABC} \dmid \famSep(A:BC) )\twoone{\nn \\ \geq &}{\geq} D_{\ALL}(\famrho_{AC} \dmid \famSep(A:C) )
\twocol{\nn \\ & }+ D_{\text{1-LOCC}}( \famrho_{AB} \dmid \famSep(A:B))\,,
 \label{eq:monogamy2} \ea
which implies
\be \label{eq:stronger}
I(A:B\mid C) \geq D_{\text{1-LOCC}}( \famrho_{AB} \dmid \famSep(A:B))\,.
\ee
At the time of \cite{LW12} it was known only that $D_{\text{1-LOCC}}\geq
E_{\text{1-LOCC}}$ (see discussion in the proof of \thmref{q-stein}) and so
\eq{monogamy2} was believed to be stronger than \eq{monogamy}.  
\thmref{q-stein} shows that \eq{monogamy2} is equivalent to \eq{monogamy} and so
it can be used in conjunction with \cite{BCY11} to give an alternative proof of
\eq{stronger}.  This possibility was already discussed in \cite{BCY11}; see the
discussion  surrounding Eq.~(43) of that paper.

\subsection{Symmetric hypothesis testing with restricted measurements}
\label{sec:q-chernoff}

Our main result on symmetric hypothesis testing against an adaptive
adversary (\thmref{chernoff}) makes it natural to conjecture a corresponding result for
symmetric quantum hypothesis testing.   For quantum states
$\rho,\sigma$, define
\ba \Gamma^*(\rho,\sigma) &\defeq \max_{0\leq \lambda\leq 1}
\Gamma^\lambda(\rho,\sigma) \twocol{\nn \\ &}
\defeq  \max_{0\leq \lambda\leq 1} -\log \tr(\rho^\lambda
\sigma^{1-\lambda})
%\\\Gamma_M^*(\rho,\sigma) &\defeq \max_{\cM\in M}
%\Gamma^*(\cM(\rho),\cM(\sigma))
\\ \Gamma_{\famM}^*(\famR, \famS) & \defeq \lim_{n\ra \infty}
\sup_{\cM \in M^n} \inf_{\substack{\rho\in R^n\\ \sigma\in S^n}}
\frac{\Gamma^*(\cM\left(\rho \right) \dmid \cM\left( \sigma \right))}{n}
\\ \gamma_{\famM}(\famR, \famS) &
\defeq \lim_{n\ra \infty}
\sup_{\cM \in M^n} \inf_{\substack{\rho\in R^n\\ \sigma\in S^n}}
\twocol{\nn \\ & \qquad}
-\frac{1}{n} \log \tr (\cM\sigma  + (I-\cM)\rho)
\ea

A quantum analogue of  Chernoff's Theorem was proven in \cite{NS06,ACM07}
and in our notation can be expressed as
$$\gamma_{\famALL}(\famrho, \famsigma) =
\Gamma^*(\rho,\sigma).$$
With restricted measurements, we might ask whether an analogue of \thmref{q-stein} holds.
\begin{con}\label{con:q-chernoff}
If $(\famM, \famR)$ and $(\famM, \famS)$ are compatible pairs, then
$$\gamma_{\famM}(\famR, \famS) =
\Gamma^*_{\famM}(\famR, \famS).$$
\end{con}
A plausible route to proving the conjecture is to use the strategy of the proof of
\thmref{q-stein}, replacing the adversarial Chernoff-Stein Lemma with the adversarial
Chernoff's Theorem (\thmref{chernoff})).  However, there are several limits and sup/inf
steps and we have not verified that these compose in the required ways.

\subsection{Open questions}
Having established a quantum Stein's Lemma for restricted
measurements, we would like to know if a strong converse can also be
proven, or more generally if we can calculate the error exponent for
the type-2 error when the type-1 error is required to be $\leq\eps$ for some
fixed $\eps\in (0,1)$.  The difficulty is that $D_{\famM}(\cdot \dmid \famS) >
D_{M^1}(\cdot \dmid S^1)$ in general, and we would need to control the rate of
convergence as a function of $n$ in the $\lim$ used to define
$D_{\famM}(\cdot \dmid \famS)$.

Like many information-theoretic quantities, $D(\famrho \dmid \famSep)$
and $D_{\famM}(\famrho \dmid \famSep)$ (for various natural choices of
$\famM$) are operationally interesting, but are hard in practice to
compute.  We would like to know the complexity of estimating them
(which is a variant of the usual question about the hardness of
testing separability, cf. \cite{HM10,BCY-stoc}) and whether good
relaxations exist (cf. \cite{BeigiS10}).

Finally, a major application of restricted-measurement
distinguishability is to the related questions of $k$-extendable
states\footnote{A bipartite state $\rho_{AB}$ is said to be
  $k$-extendable if there exists a state $\tilde
\rho_{AB_1\ldots B_k}$ such that $\tilde \rho_{AB_i} = \rho_{AB}$ for
each $i$.  The idea of $k$-extendability was introduced in
\cite{RW89, DohertyPS04}, where it was proved that for any fixed dimension of $A$ and/or $B$, the set
of $k$-extendable states approaches the set of separable states.
However, the rate of convergence is an open question.}, tripartite states with low conditional mutual information
(i.e.~``approximate Markov states'', cf. \cite{ILW08}), and the quality of
approximations achieved by the sum-of-squares hierarchy
(cf. \cite{BHKSZ12}).
  A few of
the more prominent open questions here are:
\bit
\item If $I(A:B\mid E)_\rho$ is small then it was recently discovered~\cite{FR15,Sutter18}
 that an
  ``approximate recovery'' map $T:E\ra E\ot B$ exists such that $(\id \ot
  T)\rho_{AE} \approx \rho_{ABE}$ in the sense of (among other measures) the measured
  relative entropy, i.e.
\be D_{\ALL}(\rho_{ABE} \dmid (\id \ot T)\rho_{AE}) \leq I(A:B\mid E)_\rho.\ee
 Ref.~\cite{Fawzi2} found that we cannot replace the $D_{\ALL}(\cdot\dmid\cdot)$ on the LHS with the ordinary quantum relative
 entropy $D (\cdot\dmid\cdot)$.  Their result leaves open the question of what
 relationship beween $\min_T D(\rho_{ABE}\dmid (\id \ot T)\rho_{AE})$ and
 $I(A:B\mid E)$ is possible.  Can we simply 
 multiply $I(A:B\mid E)$  by some constant, or can these quantities differ by an
 amount that grows with dimensions?  We do not even know whether the ratio between
 these quantities could be arbitrarily large in fixed dimensions.
 
\item How large can $D_{\famM}(\rho \dmid \Sep)$ be when $\rho$ is
  $k$-extendable and $\famM$ is the class of separable measurements?
  Sharp bounds are known~\cite{BCY-stoc} when $\famM=\text{1-LOCC}$,
  and if they could be extended to separable measurements it would
  have implications for quantum Merlin-Arthur games with multiple
  Merlins~\cite{HM10} as well as for classical optimization algorithms.
\item  The ability of semidefinite programming hierarchies to
  estimate small-set expansion can be understood in terms of a
  restricted-measurement distinguishability problem~\cite{BHKSZ12}.
  A major open question is whether small-set expansion on graphs of
  size $n$ can be well-approximated by $O(\log n)$ levels of these
  hierarchies, which would imply a quasipolynomial-time algorithm for
  the problem.  Can tools from quantum information shed further light here?
\eit

\remove{
Can we, like \cite{BP10b}, obtain a reversible resource theory here?  Without a
strong converse, the ``formation'' part will be missing.  From the
achievability of hypothesis testing, we do obtain a distillation map,
but the restriction to measurements in $\famM$ is not very interesting.
It simply means that our distillation map is of the form $\Lambda(X) =
\tr (MX) \Phi^{\ot n D_{\famM, S}^\infty(\rho)} + (1-\tr MX) \tau$, for
some separable $\tau$, and with the restriction that $M\in
\famM$. However, the restriction on $M$ (e.g. if $\famM=\LOCC$) does not
imply that $\Lambda$ can be implemented using comparable resources.
In general we can only say that $\Lambda$ is separability-preserving,
just like in BP2.
}

\appendix

\section{Appendix: Background on Quantum Information}
\label{sec:background}

This appendix contains a very brief review of the quantum formalism
and notation used in this paper.  For a much more detailed introduction to
quantum information theory, see \cite{Wilde-book}, or for an overview
of the field of quantum computing and quantum information more
generally see \cite{NC00, Kitaev:02a}.

\medskip
\noindent
{\bf Density matrices.}
The quantum analogue of a probability distribution over $[d]=
\{1,\ldots,d\}$ is called a {\em density matrix}, or simply a {\em
  state}.  Density matrices must be positive semi-definite and have
trace one.  These conditions are analogous to the requirement that
probabilities must be nonnegative and normalized; indeed diagonal
density matrices correspond exactly to probability distributions.  If
$A$ is a finite-dimensional Hilbert space, then define $\cD(A)$ to be the set of
density matrices on $A$, meaning the set of operators on $A$ that are
positive semi-definite and have trace one.  Let $\cL(A,B)$ denote the
set of bounded linear operators from $A$ to $B$, and let $\cL(A) := \cL(A,A)$.

\medskip
\noindent
{\bf Tensor product.}
To describe composite quantum systems, we use the tensor product.  The
tensor product of a vector  $x\in \bbC^{d_1}$ and a vector
$y\in \bbC^{d_2}$ is denoted $x\ot y$ and has entries that run over
all $x_{i_1}y_{i_2}$ for $i_1\in [d_1], i_2\in [d_2]$.  Similarly, if
$X$ and $Y$ are matrices, then their tensor product $X\ot Y$ has
matrix elements $(X\ot Y)_{(i_1, i_2), (j_1, j_2)} = X_{i_1, j_1}
Y_{i_2, j_2}$.   For vector spaces $A, B$, we let $A\ot B$ denote the
span of $\{a \ot b : a\in A, b\in B\}$.  Note that $\bbC^{d_1} \ot
\bbC^{d_2} \cong \bbC^{d_1 d_2}$.  Finally, in each case we use the
tensor power notation $X^{\ot n}$ to stand for
$$ \overbrace{X \ot X \ot \cdots \ot X}^{n \text{ times}}.$$

\medskip
\noindent
{\bf Product and separable states.}
The tensor product is used to combine quantum states in the same way
that independent classical probability distributions are combined to
form a joint distribution.  Indeed, if $p,q$ are probability
distributions of independent random variables, then $p \ot q$ denotes
the joint distribution.  Similarly, if $\rho$ and $\sigma$ are density
matrices, then $\rho \ot \sigma$ denotes the state of a system that is
in a so-called {\em product state}.     The convex hull of the set of
product states is called the set of {\em separable states}.  We write
$\Sep(A:B)$ to indicate the split along which we demand that the
states be separable, e.g.
\be \Sep(A:B) = \conv\{\alpha \ot \beta : \alpha\in \cD(A),
\beta\in\cD(\beta)\}.\ee
Although the set $\Sep(A:B)$ is convex, it is not easy to work with.
For example, computational hardness results are known for the weak
membership problem.
Instead, it is sometimes more convenient to
consider the relaxation $\PPT$, which denotes the set of states with
Positive Partial Transpose.   The partial transpose operator $\Gamma$
(meant to resemble the right half of the $T$ that usually denotes
transpose) acts linearly on $\cL(A\ot B)$ by mapping $X \ot Y$ to $X
\otimes Y^T$; equivalently we can write it as $\id_A \ot T_B$, where $\id_A$ is
the identity operator on $\cL(A)$ and $T_B$ is the transpose operator on
$\cL(B)$.  We define $\PPT(A:B) = \{\rho\in\cD(A\ot B) : \rho^\Gamma
\in \cD(A:B)\}$.  This set is easier to work with because it has a
semidefinite-programming characterization.  Moreover,
it is straightforward to show that $\Sep(A:B)
\subset \PPT(A:B)$. However, in general this inclusion is strict, and
as the dimensions of $A,B$ grow large, $\PPT$ can be an arbitrarily
bad approximation for $\Sep$~\cite{BeigiS10}.

\medskip
\noindent
{\bf Partial trace.}  Another concept from probability theory that we
will need to generalize is the idea of a marginal distribution.  Say
we have a density matrix $\rho_{AB} \in \cD(A\ot B)$.  The subscript
emphasizes the systems which $\rho$ describes, which are analogous to
the random variables corresponding to a probability distribution.  To
obtain the state on only the $A$ system, we apply the {\em partial
  trace} operator $\tr_B := \id_A \ot \tr_B$ to $\rho_{AB}$.  The
action of the partial trace is often denoted by writing only the
subscripts, as in
\be \rho_A := \tr_B \rho_{AB}
\qquad\text{and}\qquad
\rho_B := \tr_A \rho_{AB}.\ee
(This notation generalizes; e.g. if $\rho\in\cD(A\ot B \ot C)$, then
$\rho_B = \tr_{AC} \rho_{ABC} = \tr_A \tr_C \rho_{ABC}$, etc.)
Concretely, $(\rho_A)_{i,i'} = \sum_j (\rho_{AB})_{(i,j), (i',j)}$ and $(\rho_B)_{j,j'} = \sum_i (\rho_{AB})_{(i,j), (i,j')}$.
We see that if $\rho$ is diagonal then this coincides with the idea of
a marginal distribution from classical probability theory.

\medskip
\noindent
{\bf Measurements.}  Although technically all of physics is described
by quantum mechanics, it is often convenient to make a distinction
between quantum information, which is often carried in very small
systems such as single atoms or single photons, and classical
information, which is carried in macroscopic systems, such as a bit in
a classical RAM.  The bridge from quantum state to probability
distribution is given by a {\em measurement} (also sometimes called a
POVM, for Positive-Operator-Valued Measure), which formally is a collection of matrices (POVM elements) $\cM =
(\cM_1,\ldots, \cM_k)$ satisfying $\cM_i \geq 0$ for each $i$ (meaning
each $\cM_i$ is positive semi-definite) and $\cM_1 + \cdots + \cM_k =
I$.  Performing the measurement $\cM$ on state $\rho$ yields outcome
$i$ with probability $\tr[\rho \cM_i]$.  Thus we can interpret $\cM$
as a linear map from $\cL(V)$ to $\bbR^k$, with the psd and
normalization conditions serving to guarantee that $\cM$ maps $\cD(V)$
to valid probability distributions.

\medskip
\noindent
{\bf Measurements on multipartite states.}  For our purposes, we will
consider a quantum state to be destroyed after it is measured.
However, if we have a quantum state on multiple systems, such as $A\ot
B$, and we measure only system $A$, then we will still have a quantum
state on system $B$.  In this case, the probability of obtaining
outcome $i$ is $\pr[i] = \tr[\cM_i\rho_A]$ and the residual state in this case
is
\be \frac{\tr_A [(\cM_i \ot I)\rho_{AB}]}{\pr[i]}.
\label{eq:residual}\ee
Since $\sum_i \cM_i= I $, we can verify that if we average over all
measurement outcomes, then system B is left in the state $\rho_B$,
independent of the choice of measurement.  This is an important
feature of quantum mechanics; despite the possibility of entanglement,
there is no way for Alice (who controls system $A$) to signal to Bob
(who controls system $B$) through her choice of measurement.

\medskip
\noindent
{\bf Restricted classes of measurements.}  Consider a bipartite system
$A\ot B$, with systems $A,B$ held by Alice and Bob respectively.
Performing a general measurement on $A\ot B$ may require that Alice
and Bob exchange quantum messages, so it is often more practical for
them to consider only measurements that they can perform using Local
Operations and Classical Communication (LOCC).  Although such
restricted measurements were initially introduced to model these
practical restrictions, they have since arisen in settings such as
\cite{BCY11, LW12} for completely different reasons.  The class LOCC
is difficult to work with and is cumbersome to even properly
define---see \cite{LOCC} for a discussion---so we will often work with
various restrictions or relaxations of it.  A restriction which is
interesting in its own right is the class $\ONELOCC$, which
corresponds to Alice performing a measurement locally and sending the
outcome to Bob.  We say that $\cM\in \ONELOCC$ if $\cM =
\{\cM_{i,j}\}$ with $\cM_{i,j} = X_i \ot Y_{i,j}$, each $X_i,
Y_{i,j}\geq 0$, $\sum_i X_i = I$ and for each $i$, $\sum_j Y_{i,j} =
I$. On the other hand, a useful relaxation is the set $\SEP$, for
which each $\cM_i$ should have the form $\cM_i = \sum_j X_{i,j} \ot
Y_{i,j}$ with each $X_{i,j}, Y_{i,j} \geq 0$.  An even further
relaxation is $\PPT$ for which we demand only that each $\cM_i^\Gamma
\geq 0$ (apart from the usual conditions that $\sum_i \cM_i = I$ and
each $\cM_i \geq 0$).  Finally we use $\ALL$ to denote the set of all
measurements.  Summarizing, we have
$$ \ONELOCC \subset \LOCC \subset \SEP
\subset \PPT \subset \ALL.$$
In each case, we consider measurements
with any finite number of outcomes, so these classes are technically
not compact.

\medskip
\noindent
{\bf Entanglement swapping.}
An important concept in our work (building on \cite{Piani09}) is that
of compatible pairs of families of measurements and states.  We say
that a POVM element $\cM_i$ is compatible with a family of
states $\famS$ if for each $n$ and each $\rho\in S^n$, applying $\cM_i$ to the first system
leaves a residual state (defined by \eq{residual}) that is in
$S^{n-1}$.  A family of measurements $\famM$ is compatible with $\famS$
if each POVM element of each measurement in $\famM$ is compatible with
$\famS$.  If $\famS = \Sep$, then $\ONELOCC, \LOCC,\SEP$ are all
compatible with $\famS$.  If $\famS=\PPT$ then the set of compatible
measurements includes $\PPT$.  However, it is easy to construct
examples of incompatible pairs.  Let $\ket{1},\ldots,\ket{d}$ be an
orthonormal basis of column vectors for $\bbC^d$
% with $\bra{1},\ldots,\bra{d}$ the corresponding row vectors 
and define
$\ket\Psi = \frac{1}{d} \sum_{i,j \in [d]} \ket i \ot \ket j \ot \ket i \ot \ket j$.
Observe that $\Psi$ has entanglement between systems 1:3 and systems
2:4, but is product across the 13:24 cut.   Now consider a measurement
acting on systems 12.  One can
calculate that
\be \tr_{12} [(\cM_i \ot I)\proj\Psi] = \frac{\cM_i^T}{d}.\ee
Thus, if $\cM_i^T$ is proportional to an entangled state, then the
measurement can create entanglement on the previous unentangled states 3,4
that were not measured.  This phenomenon---in which we start with
$A_1:A_2$ and $B_1:B_2$ entanglement, measure $A_1B_1$ and end with
$A_2:B_2$ entanglement---is called entanglement
swapping~\cite{swapping} and is one of the main new difficulties
encountered in attempting to perform hypothesis testing with respect
to classes such as $\Sep$.

\medskip
\noindent
{\bf Entropy.}
The classical (Shannon) entropy of a distribution $p$ is given by
$H(p) = - \sum_i p_i \log(p_i)$.  The quantum analogue is called the
von Neumann entropy, and is given by $H(\rho) = -\tr [\rho \log \rho]$.
 Observe that $H(\rho)$ is
the Shannon entropy of the eigenvalues of $\rho$, and coincides with
the Shannon entropy when we consider probability distributions to be
diagonal density matrices.   If $\rho_{ABC}$ is a multipartite state,
then we let $H(A)_\rho := H(\rho_A), H(AB)_\rho = H(\rho_{AB}),$ etc.
When $\rho$ is understood, we may write simply $H(A), H(AB), \ldots$.
Analogous to the classical mutual information, conditional entropy,
etc. we can define
\ba H(A\mid B) &\defeq H(AB) - H(B) \twocol{\nn}\\
I(A:B) &\defeq H(A) + H(B) - H(AB) \twocol{\nn}\\
I(A:B\mid C) &\defeq H(AC) + H(BC) - H(ABC) - H(C), \twocol{\nn}
\ea
in each case with an implicit dependence on some state $\rho$.
Finally, the quantum relative entropy is $D(\rho \dmid \sigma) := \tr
[\rho(\log\rho - \log\sigma)]$.  Many of these quantities behave
similarly to their classical analogues, but a number of new subtleties
emerge; see Chapter 11 of \cite{Wilde-book} or Chapter 11 of
\cite{NC00} for more information.

\section*{Acknowledgments}
We are grateful to Keiji Matsumoto for helpful conversations about
hypothesis testing, and to the anonymous referee for going far beyond
their usual duties and correcting many mistakes, small and large, both
in our earlier versions and even in the classic text \cite{CT06}.  AWH
and FGSLB also thank the Mittag-Leffler Institute for their
hospitality while some of this work was done.  FGSLB was funded by
EPSRC. AWH was funded by NSF grants CCF-1111382, CCF-1452616,
CCF-1729369, PHY-1818914 and ARO contract W911NF-12-1-0486.  JRL was
supported by NSF grants CCF-1217256 and CCF-0905626.

\end{document}